\newcommand{\entropy}[1]{H\!\left(#1\right)}
\renewcommand{\sc}{spatially-coupled}
\newcommand{\rate}{\mathtt{r}}
\newcommand{\x}{\mathsf{x}}
\newcommand{\ha}{\ensuremath{h^{[1]}}}
\newcommand{\hb}{\ensuremath{h^{[2]}}}
\newcommand{\G}{\mathsf{G}}
\renewcommand{\a}{\mathsf{a}}
\renewcommand{\b}{\mathsf{b}}
\newcommand{\e}{\text{e}}
\newcommand{\X}{\mathbf{X}}
\newcommand{\Y}{\mathbf{Y}}
\newcommand{\gexit}{\mathsf{g}}
\renewcommand{\d}{\text{d}}
\newtheorem{lemma}{Lemma}
\pgfplotsset{tick label style={
font=\Large}}
\begin{document}
\pgfdeclarelayer{background}
\pgfdeclarelayer{foreground}
\pgfsetlayers{background,main,foreground}
\IEEEoverridecommandlockouts
\title{\huge{Universal Codes for the Gaussian MAC \\via Spatial
    Coupling}} \hypersetup{%
  pdfauthor={Arvind Yedla}, 
  pdftitle={Universal Codes for the Gaussian
    MAC via Spatial Coupling}, 
  pdfkeywords={Gaussian MAC, LDPC codes,
    spatial coupling, EXIT functions, density evolution, joint
    decoding, protograph, area theorem}
}

\author{Arvind Yedla, Phong S. Nguyen, Henry D. Pfister, and Krishna R. Narayanan%
\thanks{This material is based upon work supported, in part, by the
  National Science Foundation (NSF) under Grant No. 0747470, by the
  Texas Norman Hackerman Advanced Research Program under Grant
  No. 000512-0168-2007, and by Qatar National Research Foundation.
  Any opinions, findings, conclusions, and recommendations expressed
  in this material are those of the authors and do not necessarily
  reflect the views of these sponsors.}
\\
  Department of Electrical and Computer Engineering, Texas A\&M University\\
  Email: \{yarvind,psn,hpfister,krn\}@tamu.edu}
\maketitle
\begin{abstract}
  We consider transmission of two independent and separately encoded
  sources over a two-user binary-input Gaussian multiple-access
  channel. The channel gains are assumed to be unknown at the
  transmitter and the goal is to design an encoder-decoder pair that
  achieves reliable communication for all channel gains where this is
  theoretically possible. We call such a system \emph{universal} with
  respect to the channel gains.

  Kudekar et al.~recently showed that terminated low-density
  parity-check convolutional codes (a.k.a. \sc{} low-density
  parity-check ensembles) have belief-propagation thresholds that
  approach their maximum a-posteriori thresholds.  This was proven for
  binary erasure channels and shown empirically for binary memoryless
  symmetric channels. It was conjectured that the principle of
  spatial coupling is very general and the phenomenon of threshold
  saturation applies to a very broad class of graphical models.  In
  this work, we derive an area theorem for the joint decoder and
  empirically show that threshold saturation occurs for this
  problem. As a result, we demonstrate near-universal performance for
  this problem using the proposed \sc{} coding system.
\end{abstract}

\begin{keywords}
  Gaussian MAC, LDPC codes, spatial coupling, EXIT functions, density
  evolution, joint decoding, protograph, area theorem.
\end{keywords}
\thispagestyle{empty}

\pagestyle{empty}

\section{Introduction}
\label{sec:introduction}
The phenomenon of threshold saturation was introduced by Kudekar et
al. \cite{Kudekar-it11} to explain the impressive performance of
convolutional LDPC ensembles
\cite{Felstrom-it99,Lentmaier-isit05}. They observed that the
belief-propagation (BP) threshold of a \sc{} ensemble is very close to
the maximum-a-posteriori (MAP) threshold of its underlying ensemble; a
similar statement was formulated independently, as a conjecture in
\cite{Lentmaier-isit10}. This phenomenon has been termed ``threshold
saturation via spatial coupling''. Kudekar et al. prove in
\cite{Kudekar-it11} that threshold saturation occurs for the binary
erasure channel (BEC) and a particular convolutional LDPC ensemble and
conjecture that the phenomenon of threshold saturation is very
general. Empirical evidence of this phenomenon has also been shown for
additive white Gaussian noise (AWGN) channels in \cite{Kudekar-istc10}
and for a class of channels with memory (the dicode erasure channel)
in \cite{Kudekar-isit11-DEC}, using EXIT-like curves. It is known that
the MAP threshold of regular LDPC codes approaches the Shannon limit
with increasing left degree, while keeping the rate fixed (though such
codes a have a vanishing BP threshold) \cite{Kudekar-it11}. So,
spatial coupling appears to provides us with a new paradigm to
construct capacity approaching codes for BMS channels.

Spatial-coupling is now being applied to more general scenarios. The
noisy Slepian-Wolf problem was considered in \cite{Yedla-isit11} and
the authors showed that the phenomenon of threshold saturation extends
to multi-terminal problems. Threshold saturation was shown for the
binary-adder channel with erasures in \cite{Kudekar-isit11-MAC} by
considering EXIT-like curves. Spatially-coupled codes have also been
shown to achieve the entire rate-equivocation region for the BEC
wiretap channel \cite{Rathi-isit11}. The effect of coupling has also
been observed for $K$-satisfiability, graph coloring and the
Curie-Weiss model of statistical physics in \cite{Hassani-itw10}, and
for compressive sensing in \cite{Kudekar-aller10} by using \sc{}
measurement matrices.

The notion of universality with respect to channel parameters was
considered in \cite[p. 398]{Tse-2005} and \cite{Tavildar-it06} in the
context of compound channels. Yedla et al. focused on the notion of
universality with respect to channel parameters for multi-terminal
problems in \cite{Yedla-aller09} and showed that \sc{} codes are
near-universal for the noisy Slepian-Wolf problem in
\cite{Yedla-isit11}. Preliminary results in \cite{Yedla-isit11} show
that spatial coupling increases the threshold for transmission over
the binary-input Gaussian multiple-access channel (MAC) as well.

In this paper, we consider the performance of \sc{} codes for
transmission over the $2$-user binary-input Gaussian MAC and
investigate the phenomenon of threshold saturation for this
problem. The Gaussian MAC has been extensively studied in the
literature and is defined by
\begin{align}
\label{eq:gmac}
  Y &= \ha X^{[1]} + \hb X^{[2]} + N.
\end{align}
The system model is shown in Fig.~\ref{fig:gmac-sys-model}. The
channel inputs $X^{[1]},X^{[2]}\in\{\pm 1\}$ and the variation in
channel gains $\ha ,\hb \in[0,\infty)$ can be explained either by
fading or by different power constraints for the two users. The noise
$N$ is a zero-mean Gaussian random variable, with fixed variance of
$1$. The capacity region is defined as the set of all
achievable rate tuples $(R^{[1]},R^{[2]})$, given by the equations

\begin{figure}[t!]
  \centering
  \begin{tikzpicture}[>=stealth]
    \node (s1) at (0,2) {$X^{[1]}$};
    \node (s2) at (0,0) {$X^{[2]}$};
    \node[coordinate] (p1) at (1.5,2) {};
    \node[coordinate] (p2) at (1.5,0) {};
    \node[draw,circle,inner sep=2pt] (c)   at (3,1) {$+$};
    \draw[->,thick] (s1) -- node[very near end,above] {$\ha $} (p1) -- (c);
    \draw[->,thick] (s2) -- node[very near end,below] {$\hb $} (p2) -- (c);
    \node (noise) at (3,2.5) {\hspace{4mm}$N\sim \mathcal{N}(0,1)$};
    \draw[->,thick] (noise) -- (c);
    \node (rec) at (5,1) {$Y$};
    \draw[->,thick] (c) -- (rec);
  \end{tikzpicture}
  \caption{The Gaussian MAC}
  \label{fig:gmac-sys-model}
\end{figure}
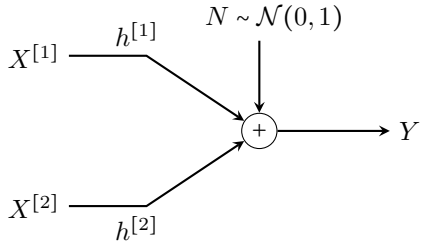

\begin{align}
  \label{eq:gmac_capacity_eq}
  R^{[1]} &\leq I\left(X^{[1]};Y|X^{[2]}\right)\nonumber \\
  R^{[2]} &\leq I\left(X^{[2]};Y|X^{[1]}\right) \\
  R^{[1]} + R^{[2]} &\leq I\left(X^{[1]},X^{[2]};Y\right).\nonumber
\end{align}

The corner points of the capacity region are known to be achievable by
combining successive cancellation at the decoder with single-user
codes \cite{Palanki-aller01}. This method can also be leveraged to
achieve any point on the dominant face by time sharing or rate
splitting \cite{Rimoldi-it96}. The problem of designing good LDPC
degree distributions was studied in \cite{Amraoui-isit02} using
density evolution (DE), where the authors design good LDPC codes for a
few points in the achievable region (in terms of rate). Another
approach was shown in \cite{Roumy-eurasip07} for the case when both
users have the same transmit power, using EXIT charts.

These optimization procedures exploit knowledge of the channel gains
to design good codes. However, in practical scenarios the channel
gains cannot be known non-causally at the transmitter (for example, a
fading channel). So, it is desirable to fix the rate pair for
transmission and view the capacity region in terms of the achievable
channel gains for that rate pair. In other words, the capacity region
is the set of all channel gains $(\ha ,\hb )$ that are achievable,
i.e., satisfy \eqref{eq:gmac_capacity_eq}. We call this region as the
MAC achievable channel-parameter region (MAC-ACPR), to illustrate that
the capacity region is defined in terms of achievable channel
parameters. The achievable channel parameter regions (ACPRs) were
introduced in \cite{Liu-it06} as reliable channel regions, in the
context of communication over parallel channels. As the channel gains
are not known at the transmitter, it is desirable to use codes which
can simultaneously achieve the entire MAC-ACPR (in order to minimize
the outage probability for fading channels).

Coding schemes which can achieve the entire rate region are said to be
\emph{universal}. However, LDPC codes optimized for a fixed $2$-user
binary-input Gaussian MAC need not perform well for different channel
parameters. Fig.~\ref{fig:acpr_mct_opt} shows the belief propagation
achievable channel parameter region (BP-ACPR), computed via density
evolution (DE), of an LDPC ensemble optimized for the equal power case
\cite[p. 311]{RU-2008}. Even though the
performance is close to capacity for the equal power case, the
performance is far from optimal for asymmetric channel gains. So we
need to consider additional constraints when optimizing irregular LDPC
codes to achieve universality. Spatially-coupled codes have been shown
(via numerical computation of the BP-ACPR) to provide near-universal
performance for the noisy Slepian-Wolf problem \cite{Yedla-isit11}. In
this paper, we consider \sc{} codes as a potential candidate to
achieve universality for this problem.

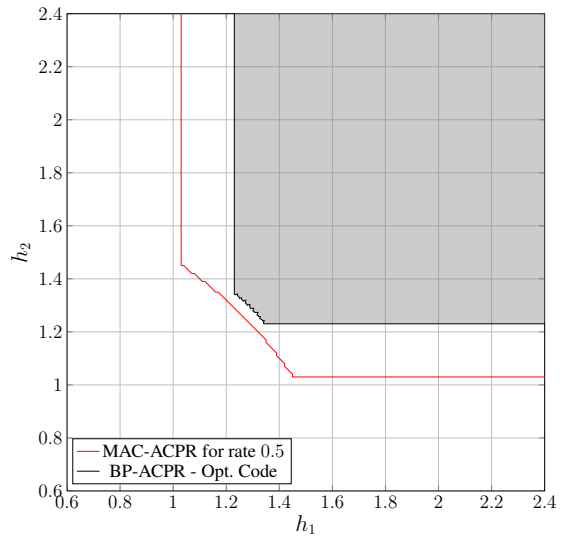
\begin{figure}
  \centering
  \begin{tikzpicture}[scale=0.5]

\begin{axis}[
scale only axis,
width=5in,
height=5in,
xmin=0.6, xmax=2.4,
ymin=0.6, ymax=2.4,
xmajorgrids,
ymajorgrids,
xlabel={\LARGE$h_1$},
ylabel={\LARGE$h_2$},
legend style = {at={(0.47,0.11)}}
]

\addplot [
color=red,
solid
]
coordinates{
 (1.03,1.45)
 (1.04,1.45)
 (1.05,1.44)
 (1.06,1.43)
 (1.07,1.42)
 (1.08,1.42)
 (1.09,1.41)
 (1.1,1.4)
 (1.11,1.39)
 (1.12,1.39)
 (1.13,1.38)
 (1.14,1.37)
 (1.15,1.36)
 (1.16,1.35)
 (1.17,1.35)
 (1.18,1.34)
 (1.19,1.33)
 (1.2,1.32)
 (1.21,1.31)
 (1.22,1.3)
 (1.23,1.29)
 (1.24,1.28)
 (1.25,1.27)
 (1.26,1.26)
 (1.27,1.25)
 (1.28,1.24)
 (1.29,1.23)
 (1.3,1.22)
 (1.31,1.21)
 (1.32,1.2)
 (1.33,1.19)
 (1.34,1.18)
 (1.35,1.17)
 (1.35,1.16)
 (1.36,1.15)
 (1.37,1.14)
 (1.38,1.13)
 (1.39,1.12)
 (1.39,1.11)
 (1.4,1.1)
 (1.41,1.09)
 (1.42,1.08)
 (1.42,1.07)
 (1.43,1.06)
 (1.44,1.05)
 (1.45,1.04)
 (1.45,1.03)
 (1.46,1.03)
 (1.47,1.03)
 (1.48,1.03)
 (1.49,1.03)
 (1.5,1.03)
 (1.51,1.03)
 (1.52,1.03)
 (1.53,1.03)
 (1.54,1.03)
 (1.55,1.03)
 (1.56,1.03)
 (1.57,1.03)
 (1.58,1.03)
 (1.59,1.03)
 (1.6,1.03)
 (1.61,1.03)
 (1.62,1.03)
 (1.63,1.03)
 (1.64,1.03)
 (1.65,1.03)
 (1.66,1.03)
 (1.67,1.03)
 (1.68,1.03)
 (1.69,1.03)
 (1.7,1.03)
 (1.71,1.03)
 (1.72,1.03)
 (1.73,1.03)
 (1.74,1.03)
 (1.75,1.03)
 (1.76,1.03)
 (1.77,1.03)
 (1.78,1.03)
 (1.79,1.03)
 (1.8,1.03)
 (1.81,1.03)
 (1.82,1.03)
 (1.83,1.03)
 (1.84,1.03)
 (1.85,1.03)
 (1.86,1.03)
 (1.87,1.03)
 (1.88,1.03)
 (1.89,1.03)
 (1.9,1.03)
 (1.91,1.03)
 (1.92,1.03)
 (1.93,1.03)
 (1.94,1.03)
 (1.95,1.03)
 (1.96,1.03)
 (1.97,1.03)
 (1.98,1.03)
 (1.99,1.03)
 (2,1.03)
 (2.01,1.03)
 (2.02,1.03)
 (2.03,1.03)
 (2.04,1.03)
 (2.05,1.03)
 (2.06,1.03)
 (2.07,1.03)
 (2.08,1.03)
 (2.09,1.03)
 (2.1,1.03)
 (2.11,1.03)
 (2.12,1.03)
 (2.13,1.03)
 (2.14,1.03)
 (2.15,1.03)
 (2.16,1.03)
 (2.17,1.03)
 (2.18,1.03)
 (2.19,1.03)
 (2.2,1.03)
 (2.21,1.03)
 (2.22,1.03)
 (2.23,1.03)
 (2.24,1.03)
 (2.25,1.03)
 (2.26,1.03)
 (2.27,1.03)
 (2.28,1.03)
 (2.29,1.03)
 (2.3,1.03)
 (2.31,1.03)
 (2.32,1.03)
 (2.33,1.03)
 (2.34,1.03)
 (2.35,1.03)
 (2.36,1.03)
 (2.37,1.03)
 (2.38,1.03)
 (2.39,1.03)
 (2.4,1.03)
 (2.41,1.03)
 (2.42,1.03)
 (2.43,1.03)
 (2.44,1.03)
 (2.45,1.03)
 (2.46,1.03)
 (2.47,1.03)
 (2.48,1.03)
 (2.49,1.03)
 (2.5,1.03)
 (2.5,2.5)
 (1.03,2.5)
 (1.03,2.49)
 (1.03,2.48)
 (1.03,2.47)
 (1.03,2.46)
 (1.03,2.45)
 (1.03,2.44)
 (1.03,2.43)
 (1.03,2.42)
 (1.03,2.41)
 (1.03,2.4)
 (1.03,2.39)
 (1.03,2.38)
 (1.03,2.37)
 (1.03,2.36)
 (1.03,2.35)
 (1.03,2.34)
 (1.03,2.33)
 (1.03,2.32)
 (1.03,2.31)
 (1.03,2.3)
 (1.03,2.29)
 (1.03,2.28)
 (1.03,2.27)
 (1.03,2.26)
 (1.03,2.25)
 (1.03,2.24)
 (1.03,2.23)
 (1.03,2.22)
 (1.03,2.21)
 (1.03,2.2)
 (1.03,2.19)
 (1.03,2.18)
 (1.03,2.17)
 (1.03,2.16)
 (1.03,2.15)
 (1.03,2.14)
 (1.03,2.13)
 (1.03,2.12)
 (1.03,2.11)
 (1.03,2.1)
 (1.03,2.09)
 (1.03,2.08)
 (1.03,2.07)
 (1.03,2.06)
 (1.03,2.05)
 (1.03,2.04)
 (1.03,2.03)
 (1.03,2.02)
 (1.03,2.01)
 (1.03,2)
 (1.03,1.99)
 (1.03,1.98)
 (1.03,1.97)
 (1.03,1.96)
 (1.03,1.95)
 (1.03,1.94)
 (1.03,1.93)
 (1.03,1.92)
 (1.03,1.91)
 (1.03,1.9)
 (1.03,1.89)
 (1.03,1.88)
 (1.03,1.87)
 (1.03,1.86)
 (1.03,1.85)
 (1.03,1.84)
 (1.03,1.83)
 (1.03,1.82)
 (1.03,1.81)
 (1.03,1.8)
 (1.03,1.79)
 (1.03,1.78)
 (1.03,1.77)
 (1.03,1.76)
 (1.03,1.75)
 (1.03,1.74)
 (1.03,1.73)
 (1.03,1.72)
 (1.03,1.71)
 (1.03,1.7)
 (1.03,1.69)
 (1.03,1.68)
 (1.03,1.67)
 (1.03,1.66)
 (1.03,1.65)
 (1.03,1.64)
 (1.03,1.63)
 (1.03,1.62)
 (1.03,1.61)
 (1.03,1.6)
 (1.03,1.59)
 (1.03,1.58)
 (1.03,1.57)
 (1.03,1.56)
 (1.03,1.55)
 (1.03,1.54)
 (1.03,1.53)
 (1.03,1.52)
 (1.03,1.51)
 (1.03,1.5)
 (1.03,1.49)
 (1.03,1.48)
 (1.03,1.47)
 (1.03,1.46)
 (1.03,1.45)

};
\addlegendentry{\Large MAC-ACPR for rate $0.5$}

\addplot [
black,fill=gray,fill opacity=0.4
]
coordinates{
 (1.23047,2.46094)
 (1.23047,1.34277)
 (1.2334,1.3398)
 (1.24219,1.34314)
 (1.24219,1.33693)
 (1.24609,1.3349)
 (1.24805,1.33075)
 (1.24976,1.32633)
 (1.25781,1.32859)
 (1.25781,1.3223)
 (1.26172,1.3201)
 (1.26367,1.31582)
 (1.27344,1.31962)
 (1.27344,1.31326)
 (1.27344,1.30689)
 (1.27734,1.30451)
 (1.2793,1.30011)
 (1.28906,1.30359)
 (1.28906,1.29714)
 (1.28906,1.2907)
 (1.2907,1.28906)
 (1.29714,1.28906)
 (1.30359,1.28906)
 (1.30011,1.2793)
 (1.30451,1.27734)
 (1.30689,1.27344)
 (1.31326,1.27344)
 (1.31962,1.27344)
 (1.31582,1.26367)
 (1.3201,1.26172)
 (1.3223,1.25781)
 (1.32859,1.25781)
 (1.32633,1.24976)
 (1.33075,1.24805)
 (1.3349,1.24609)
 (1.33693,1.24219)
 (1.34314,1.24219)
 (1.3398,1.2334)
 (1.34277,1.23047)
 (2.46094,1.23047)
} |- (axis cs:2.5,2.5) -- cycle;
\addlegendentry{\Large BP-ACPR - Opt. Code}

\end{axis}
\end{tikzpicture}

  \caption{BP-ACPR for an LDPC code optimized for the equal rate,
    equal power case and the MAC-ACPR for rate $1/2$. The degree
    profiles can be found in \cite[p. 311]{RU-2008}.}
  \label{fig:acpr_mct_opt}
\end{figure}

To simplify notation, we assume that both users employ different codes
of rate $R$, from the same ensemble. We use \sc{} codes for this
problem and provide numerical evidence for the phenomenon of threshold
saturation. Analogous to the MAP threshold for point-to-point
communication, one can define a MAP boundary for multi-terminal
scenarios. An appropriate GEXIT kernel is also defined to construct
GEXIT curves for this problem. These GEXIT curves satisfy the area
theorem by definition and provide an outer bound to the MAP
boundary. The threshold saturation phenomenon is observed to occur
towards the MAP boundary i.e., the BP-ACPR boundary saturates towards
the MAP boundary. This provides numerical evidence that supports the
conjecture that the outer bound on the MAP boundary, computed via the
area theorem, is tight. The MAP boundary for the $(4,8)$ regular LDPC
ensemble is observed to be close to the boundary of the MAC-ACPR,
thereby showing that \sc{} codes provide near-universal performance
for this problem. To the knowledge of the authors, this is currently the only
practical coding scheme that achieves near-universal performance for
this problem.

The paper is organized as follows: In
Section~\ref{sec:dens-evol-gexit}, we describe DE for the joint
iterative decoder and the GEXIT curves used to identify the upper
bounds on the MAP threshold are discussed in
Section~\ref{sec:map-threshold}. We briefly discuss spatial coupling
in Section~\ref{sec:spatial-coupling} and density evolution for \sc{}
codes in Section~\ref{sec:density-evolution}. The results are
presented in Section~\ref{sec:results-concl-remark}.


\section{Density Evolution and GEXIT Curves}
\label{sec:dens-evol-gexit}
In this section, we briefly cover some notation and background for
LDPC codes, the $2$-user binary-input Gaussian MAC and density
evolution. We will then discuss GEXIT curves for the joint iterative
decoder and use them to compute an outer bound on the MAP boundary.

\subsection{Background}
\label{sec:background}

An LDPC ensemble can be characterized by its degree profiles. Based on
standard notation \cite{RU-2008}, we let $\lambda(x) = \sum_i
\lambda_i x^{i-1}$ be the degree distribution (from an edge
perspective) corresponding to the variable nodes and $\rho(x) = \sum_i
\rho_i x^{i-1}$ be the degree distribution (from an edge perspective)
of the parity-check nodes in the decoding graph. The coefficient
$\lambda_i$ (resp. $\rho_i$) gives the fraction of edges that connect
to variable nodes (resp. parity-check nodes) of degree
$i$. Likewise, $L_i$ is the fraction of variable nodes with degree
$i$. The design rate of an LDPC ensemble is given by
\begin{align*}
  \rate(\lambda,\rho) = 1 -
  \frac{\int_0^1\rho(x)\text{d}x}{\int_0^1\lambda(x)\text{d}x}.
\end{align*}

\subsection{The joint iterative decoder}
\label{sec:2-user-gaussian}

Let $\X^{[1]},\X^{[2]} \in \{\pm 1\}^n$ denote the input of users $1$
and $2$ respectively and $\Y$ denote the received vector. We consider
the case when both users employ LDPC codes for transmission. To
simplify notation, we assume that both the users encode the data using
LDPC codes from the standard irregular ensemble LDPC$(n,\lambda,\rho)$
and that the transmission is bit-aligned. The factor graph of the
joint decoder (see Fig.~\ref{fig:jd_tanner}) consists of two single
user Tanner graphs, whose variable nodes are connected through a
function node \cite[p. 308]{RU-2008}. The variable nodes that are
connected via the function node are chosen at random\footnote{Other
  matching rules result in a different performance in general.}. We
also assume that the joint decoder iterates by performing one round of
decoding for user one, followed by one round of decoding for user
two. Let $X_i = \left(X^{[1]}_i,X^{[2]}_i\right)$ and $\X =
\left(\X^{[1]},\X^{[2]}\right)$. Without loss of generality, we can
label the elements of $\{\pm 1\}^2$ by integers
$\mathcal{X}\triangleq\{0,1,2,3\}$ using the map
$\pi:\mathcal{X}\to\{\pm 1\}^2$, defined by $0\mapsto (+1,+1), 1 \mapsto (+1,-1),
2\mapsto (-1,+1)$ and $3\mapsto (-1,-1)$. Let
$\pi_1,\pi_2:\mathcal{X}\to\{\pm 1\}$ be the projections onto the
first and second coordinate respectively. Then, the canonical
representation of the channel output is given by
\begin{align*}
  \nu_{x_i}(y_i) &= p_{Y|X^{[1]},X^{[2]}}(y_i|\pi_1(x_i),\pi_2(x_i)) \\
  &= \frac{1}{\sqrt{2\pi\sigma^2}}\exp\left[-\frac{(y_i-\ha \pi_1(x_i)-\hb \pi_2(x_i))^2}{2\sigma^2}\right].
\end{align*}
Let $m^{[j]}_{i,v\to f}$ and $m^{[j]}_{i,f\to v}$ denote the
``variable node to function node'' and ``function node to variable
node'' messages\footnote{Here, the messages are in the log-likelihood
  domain.}, respectively, for variable node $i$ of the $j$th
user. Here $j\in\{1,2\}$ and $i\in\{1,2,\ldots,n\}$. The message
passing rules at the function node are given by
\begin{align}
  \label{eq:fun_12}
  m^{[1]}_{i,f\to v} &= \log\frac{\nu_{0}(y_i)\e^{m^{[2]}_{i,v\to f}} +
    \nu_{1}(y_i)}{\nu_{2}(y_i)\e^{m^{[2]}_{i,v\to f}} + \nu_{3}(y_i)}, \\
  \label{eq:fun_21}
  m^{[2]}_{i,f\to v} &= \log\frac{\nu_{0}(y_i)\e^{m^{[1]}_{i,v\to f}} +
    \nu_{2}(y_i)}{\nu_{1}(y_i)\e^{m^{[1]}_{i,v\to f}} + \nu_{3}(y_i)}.
\end{align}
Note that this function node operation is not symmetric with respect to
the users in general. The operations are the same only for the case of
symmetric fading coefficients i.e., when $\ha  = \hb$. 

\subsection{Density evolution}
\label{sec:density-evolution-1}

Density evolution is employed to analyze the asymptotic performance of
the ensemble LDPC$(n,\lambda,\rho)$. The lack of symmetry means that
one cannot use the all-zero codeword assumption for this
problem. Instead, one may assume that both users transmit codewords of
type one-half, which occurs with high probability (a more thorough
discussion can be found in \cite[p. 296]{RU-2008}). In this case,
$p_X(x) = 1/4,\forall x\in\mathcal{X}$. We use the notation
$\a_{\text{BAWGNMA}}\triangleq\a_{\text{BAWGNMA}(\ha ,\hb )}$ to
denote the density of the received random variable $Y$. Let $f_{1\to
  2}(\cdot,\a_{\text{BAWGNMA}})$ (respectively $f_{2\to
  1}(\cdot,\a_{\text{BAWGNMA}})$) be the density transformation
operator corresponding to a message from user $1$ to user $2$ (user
$2$ to user $1$) via the function node. More precisely,
\begin{align*}
  f_{1\to 2}(\a,\a_{\text{BAWGNMA}}) &\triangleq \sum_{x\in\mathcal{X}}p(x)f_{12}(\a(\pi_1(x)u),\nu_x(u))\\
  f_{2\to 1}(\b,\a_{\text{BAWGNMA}}) &\triangleq \sum_{x\in\mathcal{X}}p(x)f_{21}(\b(\pi_2(x)u),\nu_x(u)),
\end{align*}
where $f_{12}(\cdot,\cdot)$ and $f_{21}(\cdot,\cdot)$ are density
transformation operators corresponding to~\eqref{eq:fun_12} and
\eqref{eq:fun_21}. Here, $\a(u)$ (respectively $\b(u)$) is the density
of the messages $m_{i,v\to f}^{[1]}$ ($m_{i,v\to f}^{[2]}$). These
operators can be computed numerically for discretized densities
following the procedure outlined in \cite{Chung-comlett01}. Let
$\a_{\ell}$ and $\b_{\ell}$ denote the $L$-density\footnote{Since the
  transmission alphabet is $\{\pm 1\}$, the densities are conditioned
  assuming the transmission of a $+1$.}  of the messages emanating
from the variable nodes to the check nodes at iteration $\ell$,
corresponding to codes $1$ and $2$. Then, the DE equations for the
joint decoder are given by
\begin{align*}
  \a_{\ell+1} &=
  f_{2\to 1}\Bigl(L\left(\rho(\b_{\ell})\right),\a_{\text{BAWGNMA}}\Bigr)\oast
  \lambda(\rho(\a_{\ell}))\\
  \b_{\ell+1} &=
  f_{1\to 2}\Bigl(L\left(\rho(\a_{\ell})\right),\a_{\text{BAWGNMA}}\Bigr)\oast
  \lambda(\rho(\b_{\ell})),
\end{align*}
where $\lambda(\a)=\sum_i\lambda_i\a^{\oast(i-1)}$,
$L(\a)=\sum_iL_i\a^{\oast(i-1)}$, and
$\rho(\a)=\sum_i\rho_i\a^{\boxast(i-1)}$. Here, $L(\rho(\cdot))$ is
the density of the messages from the variable node to the function
node. These equations accurately represent the evolution of densities
at the decoder due to the symmetry of the variable and check node
operations. The fixed points of density evolution are the triples
$(\a_{\text{BAWGNMA}},\a,\b)$ which satisfy
\begin{align}
  \label{eq:de_fp}
  \a &=
  f_{2\to 1}\Bigl(L\left(\rho(\b)\right),\a_{\text{BAWGNMA}}\Bigr)\oast
  \lambda(\rho(\a))\notag\\
  \b &=
  f_{1\to 2}\Bigl(L\left(\rho(\a)\right),\a_{\text{BAWGNMA}}\Bigr)\oast
  \lambda(\rho(\b)).
\end{align}
Let $\Delta_{+\infty}$ denote the delta function at $+\infty$ and let
$\hb=A\cdot\ha$, for some $A\in[0,+\infty]$. The BP threshold is defined
by
\begin{align*}
  h^{[1],\text{BP}}(\lambda,\rho,A) &= \sup\bigl\{\ha :\text{ The fixed point equation}\\
  & ~\eqref{eq:de_fp}\text{ has a solution
  }(\a,\b)\neq(\Delta_{+\infty},\Delta_{+\infty})\bigr\},\\
  h^{[2],\text{BP}}(\lambda,\rho,A) &= Ah^{[1],\text{BP}}(\lambda,\rho,A).
\end{align*}
The set of all points $(h^{[1]'},Ah^{[1]'})\ni h^{[1]'}\geq
h^{[1],\text{BP}}(\lambda,\rho,A)$ is called the BP-ACPR and its
boundary is called the DE boundary.

\begin{figure}
  \centering
  \begin{tikzpicture}[scale=0.53,>=stealth]

\draw[rounded corners,fill=brown,opacity=0.4] (-1,2) rectangle +(13,1);
\draw (5.5,2.5) node {permutation $\pi_1$};
\draw[rounded corners,fill=brown,opacity=0.4] (-1,9) rectangle +(13,1);
\draw (5.5,9.5) node {permutation $\pi_2$};

\foreach \x in {0,2,4,6,11}
{
  \filldraw[black] (\x,0.9)+(-4pt,-4pt) rectangle +(4pt,4pt);
  \draw (\x,0.9)+(0pt,-4pt) -- (\x,2);
  \draw (\x,0.9)+(0pt,-4pt) -- ([xshift = 3pt]\x,2);
  \draw (\x,0.9)+(0pt,-4pt) -- ([xshift = 6pt]\x,2);
  \draw (\x,0.9)+(0pt,-4pt) -- ([xshift = -3pt]\x,2);
  \draw (\x,0.9)+(0pt,-4pt) -- ([xshift = -6pt]\x,2);

  \filldraw[black] (\x,11)+(-4pt,-4pt) rectangle +(4pt,4pt);
  \draw (\x,11) -- (\x,10);
  \draw (\x,11) -- ([xshift = 6pt]\x,10);
  \draw (\x,11) -- ([xshift = 3pt]\x,10);
  \draw (\x,11) -- ([xshift = -3pt]\x,10);
  \draw (\x,11) -- ([xshift = -6pt]\x,10);
}

\foreach \x in {-0.5,1.5,3.5,5.5,11.5} {
  \shade[ball color=blue] (\x,4) circle (4pt);
\begin{pgfonlayer}{background}
  \draw (\x,4)+(0cm,-2pt) -- ([xshift=0cm]\x,3);
  \draw (\x,4)+(0cm,-2pt) -- ([xshift=-0.2cm]\x,3);
  \draw (\x,4)+(0cm,-2pt) -- ([xshift=0.2cm]\x,3);
\end{pgfonlayer}

  \shade[ball color=blue] (\x,8) circle (4pt);
\begin{pgfonlayer}{background}
  \draw (\x,8)+(0cm,2pt) -- ([xshift=0cm]\x,9);
  \draw (\x,8)+(0cm,2pt) -- ([xshift=-0.2cm]\x,9);
  \draw (\x,8)+(0cm,2pt) -- ([xshift=0.2cm]\x,9);
\end{pgfonlayer}
}

\foreach \x in {-0.5,1.5,3.5,5.5,11.5} {
  \node[diamond,draw=black,fill=red,inner sep=0pt,minimum size=6pt] at (\x,6)
  {};
\begin{pgfonlayer}{background}
  \draw[->] (\x,6)+(-1,0) -- ([xshift=-4pt]\x,6);
\end{pgfonlayer}
}

\begin{pgfonlayer}{background}
\draw (-1,4)+(0.5cm,2pt) -- ([xshift=0.5cm]-1,7.925);
\draw (1,4)+(0.5cm,2pt) -- (1.5,7.925);
\draw (3,4)+(0.5cm,2pt) -- (3.5,7.925);
\draw (5,4)+(0.5cm,2pt) -- (5.5,7.925);
\draw (11,4)+(0.5cm,2pt) -- (11.5,7.925);
\end{pgfonlayer}

\foreach \x in {7.5,8,8.5,...,9.5} {
  \foreach \y in {1,4,6,8,11} {
    \filldraw (\x,\y) circle (1pt);
  }
}

\draw (-2,1.5) node {$\rho(x)$};
\draw (-2,3.5) node {$\lambda(x)$};
\draw (-2,10.5) node {$\rho(x)$};
\draw (-2,8.5) node {$\lambda(x)$};



\draw (12.7,6) node[black] {$f(\cdot,\cdot)$};
\draw[->,thick,gray] (6,5.5) -- (6,7);
\node (f) at (6,5) {$f_{12}$};
\draw[->,thick,gray] (4,6.5) -- (4,5);
\node (f) at (4,7) {$f_{21}$};

\draw[->,thick,gray] (12.5,3.5) -- (12.5,1.5)  node[black,midway,right=2pt] {$\mathsf{a}_{\ell}$};
\draw[->,thick,gray] (12.5,8.5) -- (12.5,10.5) node[black,midway,right=2pt] {$\mathsf{b}_{\ell}$};
\end{tikzpicture}

  \caption{Tanner graph of the joint decoder. The variable nodes of
    each code are connected through function nodes, which receives the
  channel outputs. The joint decoder iterates by passing messages
  between the component decoders.}
  \label{fig:jd_tanner}
\end{figure}
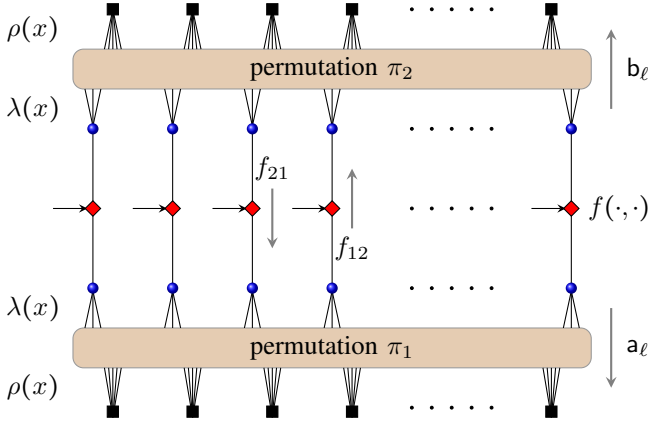

\subsection{GEXIT curves}
\label{sec:gexit-curves}
The following approach can be applied to any binary-input MAC
characterized by a single parameter, whose density is differentiable
and degraded with respect to that parameter. A through discussion of
channel degradation can be found in \cite[p. 204]{RU-2008}. The
binary-input Gaussian MAC defined by \eqref{eq:gmac} satisfies these
conditions for the parameter $\alpha$, with $h_1 = \alpha$ and $h_2 =
A\alpha$, for any fixed $A\in[0,+\infty)$. We note that any
differentiable parameterization would suffice for the following
discussion. 

Now, suppose that the $i$th bit is transmitted through a channel with
parameter $\alpha_{i}$ and that each $\alpha_i$ is a differentiable
function of some parameter $\alpha$. We omit the dependence of the
channel on $A$ throughout this section for clarity of notation. The
GEXIT curve is defined by
\begin{align*}
  \gexit(\alpha) \triangleq \frac 1n
  \sum_{i=1}^{n}\underbrace{\frac{\partial}{\partial\alpha_{i}}\entropy{\X|\Y(\alpha)}}_{\triangleq
  \gexit_{i}(\alpha_i)}\frac{\partial\alpha_i}{\partial\alpha}.
\end{align*}
A more convenient expression for the GEXIT curve can be derived
following the procedure given in \cite{Measson-2006} for non-binary
codes. Let $\mathbf{y}_{\sim i} = \mathbf{y}\backslash y_i$,
$\phi_{i}(\mathbf{y}_{\sim i}) = \{p_{X_i|\Y_{\sim
    i}}(x|\mathbf{y}_{\sim i}),x\in\mathcal{X}\}$ and
$\Phi_i\triangleq\phi_i(\Y_{\sim i})$ be the corresponding random
variable. Note that $\phi_i(\cdot)$ is the extrinsic MAP estimator of
$X_i$\footnote{To see this, write
  \begin{align*}
    p_{\Y_{\sim i}|X_i}\!(\mathbf{y}_{\sim i}|x_i)\!=\!
    \frac{p_{X_i|\Y_{\sim i}}\!(x_i|\mathbf{y}_{\sim
        i})}{p_{X_i}(x_i)}p_{\Y_{\sim i}}\!(\mathbf{y}_{\sim i})\!=\!
    \frac{\phi_i\cdot e[x_i]}{p_{X_i}(x_i)}p_{\Y_{\sim
        i}}\!(\mathbf{y}_{\sim i}),
  \end{align*}
where $e[x_i]$ is the standard basis vector with a $1$ in the
$x_i$-th coordinate and use the result in \cite[p. 29]{RU-2008}.}. 

\begin{lemma}
  Suppose that all bits are transmitted through channel with
  parameter $\alpha$. Then, the $i$th GEXIT function is given by
  \begin{align*}
    \gexit_i(\alpha) = \sum_{x\in\mathcal{X}}p(x)\int\limits_{\mathbf{u}}\a_{x,i}(\mathbf{u})\kappa_{x}(\mathbf{u})\text{d}\mathbf{u},
  \end{align*}
  where $\a_{x,i}(\mathbf{u})$ is the distribution of
  $\Phi_i$ given $X_i = x$ and the GEXIT kernel is given by
  \begin{align}
    \label{eq:kernel}
    \kappa_{x}(\mathbf{u}) = 
\int\frac{\partial}{\partial\alpha} p(y|x)\!
    \log_2\frac{\sum_{x'}u[x']p(y|x')}
    {u[x]p(y|x)}\text{d}y,
  \end{align}
  where $u[j]$ denotes the $j$th component of $\mathbf{u}$.
\end{lemma}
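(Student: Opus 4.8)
The plan is to follow the non-binary GEXIT derivation of \cite{Measson-2006}, specialized to the $4$-ary alphabet $\mathcal{X}$ of the joint decoder. Since only the $i$th channel carries the parameter $\alpha_i$, I would first isolate that dependence by writing $\entropy{\X|\Y} = \entropy{\X|\Y_{\sim i}} - I(\X;Y_i|\Y_{\sim i})$ and observing that the first term does not depend on $\alpha_i$. Because the channel is memoryless, $Y_i$ is conditionally independent of $(\X_{\sim i},\Y_{\sim i})$ given $X_i$, so $I(\X;Y_i|\Y_{\sim i}) = I(X_i;Y_i|\Y_{\sim i})$ and hence $\gexit_i(\alpha) = -\frac{\partial}{\partial\alpha_i}I(X_i;Y_i|\Y_{\sim i})$. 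Conditioning on $\Y_{\sim i}=\mathbf{y}_{\sim i}$ turns this into the mutual information of the single channel $p(y|x)$ driven by the prior $\mathbf{u}=\phi_i(\mathbf{y}_{\sim i})$, so that
\[
  I(X_i;Y_i|\Y_{\sim i}) = \sum_{x} p(x)\int_{\mathbf{u}} \a_{x,i}(\mathbf{u})\, g_x(\mathbf{u})\,\d\mathbf{u}, \qquad g_x(\mathbf{u}) = \int p(y|x)\log_2\frac{p(y|x)}{\sum_{x'}u[x']p(y|x')}\,\d y,
\]
where I use that weighting by $\phi_i(\mathbf{y}_{\sim i})[x]$ and integrating over $\mathbf{y}_{\sim i}$ is precisely integrating over the law $\a_{x,i}$ of $\Phi_i$ given $X_i=x$.

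The observation I would exploit next is that $\Phi_i$ depends only on $\Y_{\sim i}$, hence not on $\alpha_i$, so the differentiation passes entirely onto $g_x$ and leaves $\a_{x,i}$ untouched. Differentiating $-g_x$ produces two pieces. The first, $\int\frac{\partial p(y|x)}{\partial\alpha}\log_2\frac{\sum_{x'}u[x']p(y|x')}{p(y|x)}\,\d y$, already coincides with the claimed kernel $\kappa_x(\mathbf{u})$: inserting the extra factor $u[x]$ in the denominator of the logarithm only adds $-\log_2 u[x]\int\frac{\partial p(y|x)}{\partial\alpha}\,\d y$, which vanishes because $\int p(y|x)\,\d y = 1$ for every $\alpha$. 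The second piece, arising from differentiating the denominator inside the logarithm, must then be shown to integrate to zero.

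The main obstacle, and the step I would treat most carefully, is showing this second piece vanishes. For this I would use the self-consistency of the extrinsic estimator: since $\Phi_i=\phi_i(\Y_{\sim i})$ is the posterior probability vector of $X_i$ given $\Y_{\sim i}$, the tower property gives $\P{X_i=x\mid\Phi_i=\mathbf{u}} = u[x]$, and therefore $p(x)\,\a_{x,i}(\mathbf{u}) = u[x]\,q(\mathbf{u})$, where $q$ is the marginal law of $\Phi_i$. Summing the second piece over $x$ against $p(x)\,\a_{x,i}(\mathbf{u})$ then collapses the weights $\sum_x u[x]p(y|x)$ against the denominator $\sum_{x'}u[x']p(y|x')$, reducing the inner integral to $\int\sum_{x'}u[x']\frac{\partial p(y|x')}{\partial\alpha}\,\d y$, which is again zero by normalization. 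Combining the surviving first piece with the definition of $\kappa_x$ yields the stated formula. I would close by noting that the interchange of differentiation and integration is justified by the differentiability and degradedness assumptions stated at the start of Section~\ref{sec:gexit-curves}.
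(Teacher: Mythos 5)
Your proof is correct, and it reaches the lemma by a route that differs from the paper's in two respects. The paper opens with the entropy chain rule, $\entropy{\X|\Y} = \entropy{X_i|Y_i,\Phi_i} + \entropy{\X_{\sim i}|X_i,\Y_{\sim i}}$, discards the second term as independent of $\alpha_i$, and then differentiates the explicit integral form of $\entropy{X_i|Y_i,\Phi_i}$ obtained from the factorization $p(x,y,\phi)=p(y|x)p(\phi|x)p(x)$ (the Markov chain $Y_i\to X_i\to\Phi_i$); in that form the kernel's denominator $u[x]p(y|x)$ appears automatically, since $p(x|\phi)=u[x]$, and the vanishing of the terms produced by differentiating inside the logarithm is only asserted, in a footnote. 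You instead peel off $\entropy{\X|\Y_{\sim i}}$ and reduce, via memorylessness, to $\gexit_i(\alpha) = -\frac{\partial}{\partial\alpha_i}I(X_i;Y_i|\Y_{\sim i})$, which forces you to convert the natural weighting $u[x]\,q(\mathbf{u})$ of the conditional mutual information into $p(x)\,\a_{x,i}(\mathbf{u})$ through the self-consistency identity $\P{X_i=x\mid\Phi_i=\mathbf{u}}=u[x]$ --- the same Bayes identity the paper uses implicitly when it writes $p(x|\phi)$ inside its entropy integral. The two decompositions are equivalent (they differ by the $\alpha_i$-independent term $\entropy{X_i|\Phi_i}$), so the computational core is the same; what your version buys is that it turns the paper's footnote into an actual argument, correctly splitting the inside-the-log derivative into a piece that vanishes pointwise because $\int\frac{\partial}{\partial\alpha}p(y|x)\,\d y=0$ and a piece that vanishes only after the collapse of $\sum_x u[x]p(y|x)$ against the denominator $\sum_{x'}u[x']p(y|x')$, and it also makes explicit that the extra $u[x]$ in the kernel's denominator costs nothing, for the same normalization reason. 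The paper's route is slightly shorter because its conditional-entropy integrand already contains the kernel's exact logarithm, so none of this bookkeeping is visible there.
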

\begin{proof}
Suppose that each bit is transmitted through a channel with parameter
$\alpha_i$ and consider the term
\begin{align*}
    \entropy{\X|\Y} &= \entropy{X_i|\Y} + \entropy{\X_{\sim i}|X_i,\Y}\\
    &= \entropy{X_{i}|Y_{i},\Phi_{i}} + \entropy{\X_{\sim i} | X_{i},\Y_{\sim i}}.
\end{align*}
Note that the second term of the decomposition does not depend on the
channel at position $i$. So, we get
\begin{align*}
  \gexit_{i}(\alpha_i) = \frac{\d}{\d\alpha_{i}}\entropy{X_{i}|Y_{i},\Phi_{i}}.
\end{align*}
We have,
\begin{align*}
  &\entropy{X_{i}|Y_{i},\Phi_{i}} = -\!\!\iint\limits_{y,\phi}\!\!\sum_{x}
  \!p(x,y,\phi)\!\log_2\!\frac{p(x,y,\phi)}{\displaystyle\sum_{x'}p(x',y,\phi)}\text{d}y\text{d}\phi\\
  &= \sum_{x}p(x)\!\!\int\limits_{\phi}\!\!p(\phi|x)\! \left(\!\int\limits_{y}\!\!p(y|x)\!
    \log_2\frac{\displaystyle\sum_{x'}p(x'|\phi)p(y|x')}
    {p(x|\phi)p(y|x)}\text{d}y\!\right)\!\text{d}\phi,
\end{align*}
which follows from the fact that
\begin{align*}
  p_{X_{i},Y_{i},\Phi_{i}}(x,y,\phi) &= p(y|x)p(\phi|x)p(x),
\end{align*}
since $Y_i\to X_i\to \Phi_i$. Taking the derivative and noting that
$p(x_i|\phi_i) = p(x_i|\mathbf{y}_{\sim i})$, we obtain\footnote{The
  terms obtained by differentiating with respect to the channel inside
  the $\log$ vanish.}
\begin{align*}
  \gexit_{i}(\alpha_i) &=
  \sum_{x}p(x)\int\limits_{\phi}p(\phi|x)
  \biggl(\int\limits_{y}\frac{\partial}{\partial\alpha}p(y|x) \\
    &\phantom{\sum_xp(x)}\log_2\frac{\displaystyle\sum_{x'}p(x'|\phi)p(y|x')}
    {p(x|\phi)p(y|x)}\text{d}y\!\biggr)\text{d}\phi \\
    &= \sum_xp(x)\int\limits_{\mathbf{u}}\a_{x}(\mathbf{u})\kappa_{x}(\mathbf{u})\text{d}\mathbf{u},
\end{align*}
and the result follows by setting $\alpha_i=\alpha$.
\end{proof}

The GEXIT function is hard to compute and hence we use the BP-GEXIT
function instead \cite{Measson-it09}. The BP-GEXIT function is
obtained by replacing the MAP extrinsic estimator with the
corresponding BP estimator. Let $\Phi_i^{\text{BP},\ell,n}$ denote the
BP extrinsic estimate of $X_i$ after $\ell$ iterations of the joint
decoder. The BP extrinsic estimate is computed using the computation
graph of depth $\ell$ for function node $i$. Define the BP-GEXIT
function at the $\ell$th iteration $\gexit^{\text{BP},\ell,n}(\alpha)$
in a similar manner to \cite{Measson-it09} (taking an expectation over
all possible computation graphs) and the asymptotic BP-GEXIT function
$\gexit^{\text{BP}}(\alpha) = \lim_{\ell\to\infty} \lim_{n\to\infty}
\gexit^{\text{BP},\ell,n}(\alpha)$. For fixed $\ell$, in the limit of
$n\to\infty$, the computation graph becomes tree-like because the
computation graphs of the two variable nodes (which themselves become
tree-like) connected to the function node do not overlap with high
probability. The extrinsic estimate of $X_i$ can then be computed via
the extrinsic estimates of $X_i^{[1]}$ and $X_i^{[2]}$. The asymptotic
BP-GEXIT function can be computed through the fixed points of density
evolution $(\a_{\text{BAWGNMA}(\alpha)},\a,\b)$ which satisfy
\eqref{eq:de_fp} and is discussed in the following Lemma.

\begin{lemma}
  Consider transmission over the channel
  $\a_{\text{BAWGNMA}}(\alpha)$ and let
    $(\a_{\text{BAWGNMA}(\alpha)},\a,\b)$ be a fixed point of
    DE. Define the BP-GEXIT value of the fixed point by
\begin{align*}
  &\G^{\text{BP}}\!(\a_{\text{BAWGNMA}},\a,\b) \!\triangleq\!\!
  \sum_{x\in\mathcal{X}}\!p(x)\!\!\! \int\!\!\! \mathsf{F}_x[\a,\b](u,v) \kappa_x(u,v)
  \text{d}u\text{d}v. 
\end{align*}
The GEXIT kernel $\kappa_x(\cdot,\cdot)$ is defined as in
\eqref{eq:kernel} and the operator $\mathsf{F}_x[\cdot,\cdot]$
computes the density of the extrinsic BP estimate $\Phi^{\text{BP}}$
given $X=x$. The BP-GEXIT curve $\mathsf{g}^{\text{BP}}(\alpha)$ is
given in parametric form by
$(\alpha,\G(\a_{\text{BAWGNMA}(\alpha)},\a,\b))$.
\end{lemma}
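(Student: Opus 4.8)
The plan is to transcribe the proof of Lemma~1 with the MAP extrinsic estimator $\Phi_i$ replaced by the BP extrinsic estimator $\Phi_i^{\text{BP},\ell,n}$, and then to pass to the limits $n\to\infty$ and $\ell\to\infty$. First I would check that the two ingredients driving Lemma~1 survive the substitution: the decomposition $\entropy{\X|\Y}=\entropy{X_i|Y_i,\Phi_i}+\entropy{\X_{\sim i}|X_i,\Y_{\sim i}}$ and the Markov chain $Y_i\to X_i\to\Phi_i$. Both do. The BP extrinsic estimate is, by construction, computed on the depth-$\ell$ computation graph of function node $i$ with the channel observation $Y_i$ excluded, hence it is a function of $\Y_{\sim i}$ alone; since the noise is independent across positions, $Y_i$ is conditionally independent of $\Y_{\sim i}$, and therefore of $\Phi_i^{\text{BP},\ell,n}$, given $X_i$. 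The differentiation step is thus identical and, for each finite $(\ell,n)$, yields
\begin{align*}
  \gexit_i^{\text{BP},\ell,n}(\alpha)=\sum_{x\in\mathcal{X}}p(x)\int\limits_{\mathbf{u}}\a_{x,i}^{\text{BP},\ell,n}(\mathbf{u})\,\kappa_x(\mathbf{u})\,\text{d}\mathbf{u},
\end{align*}
where $\a_{x,i}^{\text{BP},\ell,n}$ is the density of $\Phi_i^{\text{BP},\ell,n}$ given $X_i=x$ and $\kappa_x$ is unchanged from \eqref{eq:kernel}; averaging over the computation-graph ensemble and over $i$ does not touch the kernel.

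Next I would take $n\to\infty$ with $\ell$ fixed, invoking the tree-like-neighborhood and concentration arguments of \cite{Measson-it09}. As recalled in the text preceding the lemma, with probability tending to one the depth-$\ell$ computation graph is a tree and the two component graphs feeding $X_i^{[1]}$ and $X_i^{[2]}$ do not overlap. This non-overlap is the key structural fact: it makes the two single-user extrinsic estimates conditionally independent given $X_i$, so that the joint extrinsic estimate of $X_i=(X_i^{[1]},X_i^{[2]})$ is the uniform-prior product of the two single-user estimates, whose densities are exactly the iteration-$\ell$ density-evolution densities $(\a_\ell,\b_\ell)$. I would define $\mathsf{F}_x[\cdot,\cdot]$ to be the operator that forms these two single-user estimates from its density arguments and returns the density of their product given $X=x$; with this definition $\a_{x,i}^{\text{BP},\ell,n}\to\mathsf{F}_x[\a_\ell,\b_\ell]$ as $n\to\infty$.

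Finally, letting $\ell\to\infty$, monotonicity of density evolution sends $(\a_\ell,\b_\ell)$ to a fixed point $(\a,\b)$ of \eqref{eq:de_fp}, so $\mathsf{F}_x[\a_\ell,\b_\ell]\to\mathsf{F}_x[\a,\b]$ and, writing $(u,v)$ for the pair of single-user messages that parameterize the extrinsic probability vector,
\begin{align*}
  \gexit^{\text{BP}}(\alpha)=\sum_{x\in\mathcal{X}}p(x)\int\mathsf{F}_x[\a,\b](u,v)\,\kappa_x(u,v)\,\text{d}u\,\text{d}v=\G^{\text{BP}}(\a_{\text{BAWGNMA}(\alpha)},\a,\b).
\end{align*}
This is precisely the asserted parametric form $(\alpha,\G^{\text{BP}}(\a_{\text{BAWGNMA}(\alpha)},\a,\b))$.

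I expect the main obstacle to be the $n\to\infty$ step rather than the algebra. Two points need care there: making the asymptotic non-overlap of the two single-user computation graphs precise, so that the product factorization of the joint extrinsic estimate is exact in the limit, and justifying the exchange of the limit with the integration against $\kappa_x$, which requires controlling the kernel near the boundary of the probability simplex where the $\log$ in \eqref{eq:kernel} diverges. By comparison, the finite-$(\ell,n)$ identity is a direct transcription of Lemma~1 and the $\ell\to\infty$ passage is routine once monotonicity of the density-evolution recursion is established.
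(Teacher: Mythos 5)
Your proposal is correct and takes essentially the same route as the paper's proof: replace the MAP extrinsic estimator in Lemma~1 by the BP estimator, use the asymptotic non-overlap of the two single-user computation graphs to factorize the joint extrinsic estimate into a product of single-user estimates whose limiting densities are the DE fixed-point densities (the paper writes this explicitly as $\mathsf{F}_x[\a,\b](u,v)=\a\left(\pi_1(x)u\right)\b\left(\pi_2(x)v\right)$ via the parameterization \eqref{eq:map}), and conclude from the kernel \eqref{eq:kernel}. The only difference is one of emphasis—you spell out the $n\to\infty$ and $\ell\to\infty$ limit interchanges and concentration issues that the paper leaves implicit—not of substance.
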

\begin{proof}
  Let $\Phi^{\text{BP}} = \mathbf{u}$. Then,
  \begin{align*}
   \mathbf{u} &= \phi(\mathbf{y}_{\sim i}) = \left\{p(x_i|\mathbf{y}_{\sim i}),x_i\in\mathcal{X}\right\} \\ 
    &= \{p(\pi_1(x_i)|\mathbf{y}_{\sim i})\cdot p(\pi_2(x_i)|\mathbf{y}_{\sim i}),x_i\in\mathcal{X}\}.
  \end{align*}
  If we define
  \begin{align*}
 u \triangleq \log\frac{p(X_i^{[1]} = +1|\mathbf{y}_{\sim i})}{p(X_i^{[1]} = -1
  | \mathbf{y}_{\sim i})},  v \triangleq \log\frac{p(X_i^{[2]} = +1|\mathbf{y}_{\sim i})}{p(X_i^{[2]} = -1
  | \mathbf{y}_{\sim i})},   
  \end{align*}
  then
  \begin{align}
    \label{eq:map}
    \mathbf{u}\! &=\! \biggl(\!\frac{\e^u}{1 + \e^u}\!\frac{\e^v}{1 + \e^v},
    \frac{\e^u}{1 + \e^u}\!\frac{1}{1 + \e^v}, \frac{1}{1 +
      \e^u}\!\frac{\e^v}{1 + \e^v}, \frac{1}{1 + \e^u}\!\frac{1}{1 +
      \e^v}\!\biggr)\notag \\
    &\triangleq f(u,v). 
  \end{align}
  Let $\a(u)$ denote the density of $U$ conditioned on $X_i^{[1]} =
  +1$ and $\b(v)$ be the density of $V$ conditioned on $X_i^{[2]} =
  +1$. Then, $\a(-u)$ is the density of $U$ conditioned on $X_i^{[1]}
  = -1$ and $\b(-v)$ is the density of $V$ conditioned on $X_i^{[2]} =
  -1$. In the limit $n\to\infty$ and taking expectation these
  densities are given by the fixed point
  $(\a_{\text{BAWGNMA}(\alpha)},\a,\b)$. Let
  $\mathsf{F}_x[\a,\b](u,v)$ be the density of $\Phi_i^{\text{BP}}$
  conditioned on $(X_i^{[1]} = \pi_1(x), X_i^{[2]} = \pi_2(x))$. Then,
  \begin{align*}
    \mathsf{F}_x[\a,\b](u,v) = \a\left(\pi_1(x)u\right)\b\left(\pi_2(x)v\right).
  \end{align*}
  For example $\mathsf{F}_0[\a,\b](u,v) = \a(u)\b(v),
  \mathsf{F}_1[\a,\b](u,v) = \a(u)\b(-v)$ and so on. The result
  follows by the definition of the GEXIT curve. The kernels
  $\kappa_x(u,v)$ are defined in the sense of \eqref{eq:map}.
\end{proof}

It can be shown that the BP-GEXIT function is an upper bound on the
GEXIT function (see the discussion in \cite[p. 206]{RU-2008}). The
BP-GEXIT curve for the LDPC$(3,6)$ ensemble is shown in
Fig.~\ref{fig:bp-gexit}, for $A=1$.

\begin{figure}
  \centering
  \begin{tikzpicture}[scale=0.5,>=stealth]

\definecolor{mycolor1}{rgb}{0,1,1}
\definecolor{mycolor2}{rgb}{1,0,1}

\begin{axis}[
scale only axis,
width=5in,
height=5in,
xmin=0, xmax=2,
ymin=-1.4, ymax=0,
xlabel=\mbox{\Large $\alpha$},
ylabel=\mbox{\Large $\mathsf{g}^{\text{BP}}$},
xmajorgrids,
ymajorgrids]

\addplot [
color=blue,
line width=1.25pt,
solid
]
coordinates{
 (0,-5.42634e-15)
 (0.01,-0.0288481)
 (0.02,-0.0576617)
 (0.03,-0.0864062)
 (0.04,-0.115047)
 (0.05,-0.143552)
 (0.06,-0.171886)
 (0.07,-0.200017)
 (0.08,-0.227914)
 (0.09,-0.255545)
 (0.1,-0.282881)
 (0.11,-0.309893)
 (0.12,-0.336553)
 (0.13,-0.362835)
 (0.14,-0.388714)
 (0.15,-0.414166)
 (0.16,-0.439169)
 (0.17,-0.463702)
 (0.18,-0.487746)
 (0.19,-0.511283)
 (0.2,-0.534298)
 (0.21,-0.556775)
 (0.22,-0.578701)
 (0.23,-0.600064)
 (0.24,-0.620855)
 (0.25,-0.641065)
 (0.26,-0.660686)
 (0.27,-0.679713)
 (0.28,-0.698141)
 (0.29,-0.715968)
 (0.3,-0.73319)
 (0.31,-0.749808)
 (0.32,-0.765823)
 (0.33,-0.781234)
 (0.34,-0.796047)
 (0.35,-0.810264)
 (0.36,-0.823888)
 (0.37,-0.836929)
 (0.38,-0.849388)
 (0.39,-0.861277)
 (0.4,-0.872598)
 (0.41,-0.883366)
 (0.42,-0.893589)
 (0.43,-0.903272)
 (0.44,-0.912423)
 (0.45,-0.921064)
 (0.46,-0.929198)
 (0.47,-0.936837)
 (0.48,-0.943992)
 (0.49,-0.950676)
 (0.5,-0.956902)
 (0.51,-0.962675)
 (0.52,-0.96802)
 (0.53,-0.972938)
 (0.54,-0.977447)
 (0.55,-0.981552)
 (0.56,-0.985281)
 (0.57,-0.988637)
 (0.58,-0.991625)
 (0.59,-0.994266)
 (0.6,-0.996567)
 (0.61,-0.998535)
 (0.62,-1.00019)
 (0.63,-1.00154)
 (0.64,-1.00258)
 (0.65,-1.00335)
 (0.66,-1.00386)
 (0.67,-1.0041)
 (0.68,-1.00408)
 (0.69,-1.00382)
 (0.7,-1.00332)
 (0.71,-1.0026)
 (0.72,-1.00166)
 (0.73,-1.00051)
 (0.74,-0.999149)
 (0.75,-0.997601)
 (0.76,-0.995861)
 (0.77,-0.99394)
 (0.78,-0.991841)
 (0.79,-0.989572)
 (0.8,-0.98714)
 (0.81,-0.984549)
 (0.82,-0.981801)
 (0.83,-0.978908)
 (0.84,-0.975869)
 (0.85,-0.972684)
 (0.86,-0.969363)
 (0.87,-0.965917)
 (0.88,-0.962338)
 (0.89,-0.958635)
 (0.9,-0.954805)
 (0.91,-0.950853)
 (0.92,-0.946785)
 (0.93,-0.942603)
 (0.94,-0.938315)
 (0.95,-0.933922)
 (0.96,-0.929425)
 (0.97,-0.924805)
 (0.98,-0.920068)
 (0.99,-0.915247)
 (1,-0.910315)
 (1.01,-0.90528)
 (1.02,-0.900199)
 (1.03,-0.894993)
 (1.04,-0.889678)
 (1.05,-0.884259)
 (1.06,-0.878783)
 (1.07,-0.87322)
 (1.08,-0.86755)
 (1.09,-0.861828)
 (1.1,-0.85597)
 (1.11,-0.850072)
 (1.12,-0.844073)
 (1.13,-0.838029)
 (1.14,-0.831877)
 (1.15,-0.825659)
 (1.16,-0.81937)
 (1.17,-0.813015)
 (1.18,-0.806605)
 (1.19,-0.800103)
 (1.2,-0.79354)
 (1.21,-0.786886)
 (1.22,-0.780155)
 (1.23,-0.773417)
 (1.24,-0.766598)
 (1.25,-0.759702)
 (1.26,-0.752815)
 (1.27,-0.745878)
 (1.28,-0.738858)
 (1.29,-0.731805)
 (1.3,-0.724716)
 (1.31,-0.71758)
 (1.32,-0.710319)
 (1.33,-0.70309)
 (1.34,-0.695805)
 (1.35,-0.688533)
 (1.36,-0.681237)
 (1.37,-0.673863)
 (1.38,-0.666452)
 (1.39,-0.659047)
 (1.4,-0.65163)
 (1.41,-0.644136)
 (1.42,-0.636657)
 (1.43,-0.62914)
 (1.44,-0.621651)
 (1.45,-0.61409)
 (1.46,-0.606589)
 (1.47,-0.598988)
 (1.48,-0.591449)
 (1.49,-0.583859)
 (1.5,-0.576268)
 (1.51,-0.568711)
 (1.52,-0.561142)
 (1.53,-0.553569)
 (1.54,-0.545893)
 (1.55,-0.53833)
 (1.56,-0.530777)
 (1.57,-0.523132)
 (1.58,-0.515497)
 (1.59,-0.507818)
 (1.6,-0.500099)
 (1.61,-0.492378)
 (1.62,-0.484632)
 (1.63,-0.476879)
 (1.64,-0.469073)
 (1.65,-0.461094)
 (1.66,-0.452923)
 (1.67,-0.444385)
 (1.68,-0.435389)
 (1.69,-0.424852)
 (1.69,-2.96196e-06)
 (1.7,-2.96196e-06)
 (1.71,-2.97647e-06)
 (1.72,-2.99078e-06)
 (1.73,-3.00487e-06)
 (1.74,-3.01874e-06)
 (1.75,-3.03239e-06)
 (1.76,-3.04579e-06)
 (1.77,-3.05894e-06)
 (1.78,-3.07183e-06)
 (1.79,-3.08444e-06)
 (1.8,-3.09678e-06)
 (1.81,-3.10882e-06)
 (1.82,-3.12056e-06)
 (1.83,-3.13199e-06)
 (1.84,-3.1431e-06)
 (1.85,-3.15387e-06)
 (1.86,-3.16429e-06)
 (1.87,-3.17436e-06)
 (1.88,-3.18406e-06)
 (1.89,-3.19339e-06)
 (1.9,-3.20233e-06)
 (1.91,-3.21087e-06)
 (1.92,-3.21901e-06)
 (1.93,-3.22672e-06)
 (1.94,-3.23401e-06)
 (1.95,-3.24086e-06)
 (1.96,-3.24726e-06)
 (1.97,-3.2532e-06)
 (1.98,-3.25868e-06)
 (1.99,-3.26369e-06)
 (2,-3.26821e-06)};

\addplot [
fill=gray,fill opacity=0.2,
line width=1.25pt,
solid
]
coordinates{
 (0,-5.42634e-15)
 (0.01,-0.0288481)
 (0.02,-0.0576617)
 (0.03,-0.0864062)
 (0.04,-0.115047)
 (0.05,-0.143552)
 (0.06,-0.171886)
 (0.07,-0.200017)
 (0.08,-0.227914)
 (0.09,-0.255545)
 (0.1,-0.282881)
 (0.11,-0.309893)
 (0.12,-0.336553)
 (0.13,-0.362835)
 (0.14,-0.388714)
 (0.15,-0.414166)
 (0.16,-0.439169)
 (0.17,-0.463702)
 (0.18,-0.487746)
 (0.19,-0.511283)
 (0.2,-0.534298)
 (0.21,-0.556775)
 (0.22,-0.578701)
 (0.23,-0.600064)
 (0.24,-0.620855)
 (0.25,-0.641065)
 (0.26,-0.660686)
 (0.27,-0.679713)
 (0.28,-0.698141)
 (0.29,-0.715968)
 (0.3,-0.73319)
 (0.31,-0.749808)
 (0.32,-0.765823)
 (0.33,-0.781234)
 (0.34,-0.796047)
 (0.35,-0.810264)
 (0.36,-0.823888)
 (0.37,-0.836929)
 (0.38,-0.849388)
 (0.39,-0.861277)
 (0.4,-0.872598)
 (0.41,-0.883366)
 (0.42,-0.893589)
 (0.43,-0.903272)
 (0.44,-0.912423)
 (0.45,-0.921064)
 (0.46,-0.929198)
 (0.47,-0.936837)
 (0.48,-0.943992)
 (0.49,-0.950676)
 (0.5,-0.956902)
 (0.51,-0.962675)
 (0.52,-0.96802)
 (0.53,-0.972938)
 (0.54,-0.977447)
 (0.55,-0.981552)
 (0.56,-0.985281)
 (0.57,-0.988637)
 (0.58,-0.991625)
 (0.59,-0.994266)
 (0.6,-0.996567)
 (0.61,-0.998535)
 (0.62,-1.00019)
 (0.63,-1.00154)
 (0.64,-1.00258)
 (0.65,-1.00335)
 (0.66,-1.00386)
 (0.67,-1.0041)
 (0.68,-1.00408)
 (0.69,-1.00382)
 (0.7,-1.00332)
 (0.71,-1.0026)
 (0.72,-1.00166)
 (0.73,-1.00051)
 (0.74,-0.999149)
 (0.75,-0.997601)
 (0.76,-0.995861)
 (0.77,-0.99394)
 (0.78,-0.991841)
 (0.79,-0.989572)
 (0.8,-0.98714)
 (0.81,-0.984549)
 (0.82,-0.981801)
 (0.83,-0.978908)
 (0.84,-0.975869)
 (0.85,-0.972684)
 (0.86,-0.969363)
 (0.87,-0.965917)
 (0.88,-0.962338)
 (0.89,-0.958635)
 (0.9,-0.954805)
 (0.91,-0.950853)
 (0.92,-0.946785)
 (0.93,-0.942603)
 (0.94,-0.938315)
 (0.95,-0.933922)
 (0.96,-0.929425)
 (0.97,-0.924805)
 (0.98,-0.920068)
 (0.99,-0.915247)
 (1,-0.910315)
 (1.01,-0.90528)
 (1.02,-0.900199)
 (1.03,-0.894993)
 (1.04,-0.889678)
 (1.05,-0.884259)
 (1.06,-0.878783)
 (1.07,-0.87322)
 (1.08,-0.86755)
 (1.09,-0.861828)
 (1.1,-0.85597)
 (1.11,-0.850072)
 (1.12,-0.844073)
 (1.13,-0.838029)
 (1.14,-0.831877)
 (1.15,-0.825659)
 (1.16,-0.81937)
 (1.17,-0.813015)
 (1.18,-0.806605)
 (1.19,-0.800103)
 (1.2,-0.79354)
 (1.21,-0.786886)
 (1.22,-0.780155)
 (1.23,-0.773417)
 (1.24,-0.766598)
 (1.25,-0.759702)
 (1.26,-0.752815)
 (1.2629,-0.7508)
 (1.2629,0)
}|- (axis cs:0,0) -- cycle;

\node[rotate=90] (map) at (axis cs:1.315,-0.16) {\LARGE{$\bar{\alpha} = 1.2629$}};
\node[blue,rotate=90] (bp) at (axis cs:1.745,-0.16) {\LARGE{$\alpha^{\text{BP}} = 1.69$}};
\node[black] (area) at (axis cs:0.7,-0.35) 
{\Huge{$\int\limits_{\bar{\alpha}}^0\mathsf{g}^{\text{BP}}(\alpha) = 1$}};

\end{axis}

\end{tikzpicture}

  \caption{BP-GEXIT curve and an upper bound on the MAP threshold
    (computed using the area theorem), for $A=1$, of the $(3,6)$
    regular LDPC ensemble. GEXIT curves in literature are typically
    parameterized by the channel entropy and the channels get
    \emph{worse} as the entropy increases. However, the channel gains
    are a natural parameterization for this problem and the channel
    gets \emph{better} by increasing the channel gains. So the GEXIT
    values are negative for this parameterization.}
  \label{fig:bp-gexit}
\end{figure}
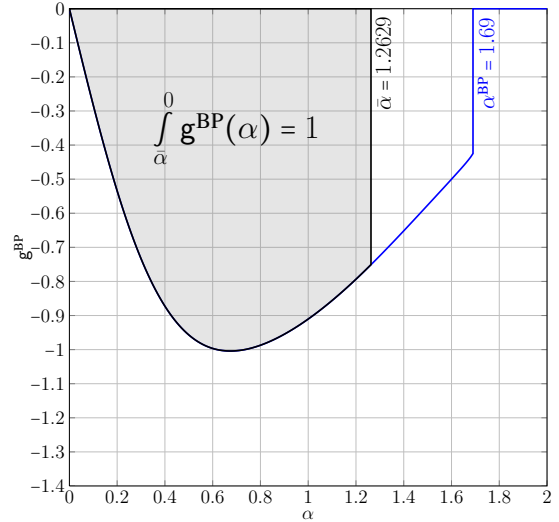

\subsection{The MAP Boundary}
\label{sec:map-threshold}
Let $\ha _{i}(\alpha) = \alpha$ and $\hb _{i}(\alpha) =
A\alpha$, for all $i=1,\ldots,n$. Consider transmission using codes
from the ensemble LDPC$(n,\lambda,\rho)$. For a fixed $A$, we define
the MAP threshold as
\begin{align*}
  \alpha^{\text{MAP}}(A) = \inf\left\{\alpha:
  \liminf_{n\to\infty}\frac 1n\mathbb{E}[H(\X|\Y(\alpha,A))] > 0\right\},
\end{align*}
where the expectation is taken over all codes in the ensemble. The set
of all points $(\alpha',A)\ni \alpha'\geq\alpha^{\text{MAP}}(A)$ form the
MAP-ACPR and its boundary is called the MAP boundary. By
definition of the GEXIT function, this gives
\begin{align*}
  \int^{0}_{+\infty}\gexit(\alpha)\d\alpha &= \frac 1n
  \int^{0}_{+\infty}\frac{\d\entropy{\X|\Y(\alpha)}}{\d\alpha}\d\alpha \\
  &= \frac 1n \entropy{\X|\Y(0)} \\
  &= \frac{2k}{n}.
\end{align*}
The above equation gives us a procedure to compute the MAP threshold,
using the GEXIT curve. Let $\bar{\alpha}$ denote the largest positive
number such that
\begin{align*}
  \int_{\bar{\alpha}}^{0}\gexit^{\text{BP}}(\alpha)\d\alpha &= 2\rate(\lambda,\rho),
\end{align*}
where $\rate(\lambda,\rho)$ is the design rate of the ensemble
LDPC$(\lambda,\rho)$. Then the MAP threshold $\alpha^{\text{MAP}} \leq
\bar{\alpha}$. The upper bound on the MAP threshold is shown in
Fig.~\ref{fig:bp-gexit} for the $(3,6)$ LDPC ensemble, for
$A=1$. Using this procedure, we can compute an outer bound to the MAP
boundary by considering different values of $A$.


\section{Spatial Coupling}
\label{sec:spatial-coupling}
Spatial coupling is best described by the $(l,r,L)$ ensemble through a
protograph \cite{Kudekar-it11,Sridharan-aller04}. The protograph
structure at the joint decoder is shown in Fig.~\ref{fig:protograph}
for a LDPC$(3,6)$ base code. The protograph is generated as follows:
Consider the protograph of a $(3,6)$ regular LDPC code. It has two
variable nodes of degree $3$ and one check node of degree $6$. Connect
both the variable nodes to the variable nodes of another protograph
via function nodes. The resulting protograph represents the joint
decoder when both users are using $(3,6)$ regular LDPC codes for
transmission over the $2$-user binary-input Gaussian MAC. Place $2L+1$
protographs at positions $-L,\cdots,L$. Each of the $3$ edges of a
variable node at position $i$ is connected to exactly one check node
at position $i-1,i,i+1$, for each user.

\begin{figure}[htb]
  \centering
  \begin{tikzpicture}[scale=0.35]
\useasboundingbox (0,-7) rectangle (24,13);
\begin{scope}[yshift=-15cm] 
  \foreach \x in {0,3,6,9,12,15,18,21,24} {
    \draw[fill=black] (\x,10)+(-5pt,-5pt) rectangle +(5pt,5pt)
    node[left=4pt,below=7pt,outer sep=0pt,inner sep=0pt] (c\x) {};
    \node[outer sep=0pt,inner sep=0pt] (cb\x) at (\x,10) {};
  }
  \foreach \x in {2,4,5,7,8,10,11,13,14,16,17,19,20,22} {
    \foreach \y in {12} {
      \shade[ball color=blue] (\x,\y) circle (6pt) node[outer
      sep=0pt,inner sep=0pt] (vb\x\y) {};
    }	
  }
  \begin{pgfonlayer}{foreground}
    \foreach \x in {14.5,15,15.5} {
      \foreach \y in {10,12} {
        \draw[fill=black] (\x,\y) circle (0.5pt);
      }
    }
  \end{pgfonlayer}
  \draw[white,fill=white] (13.75,8) rectangle +(2.5,4.5);
  \begin{pgfonlayer}{background}
    \foreach \x/\y/\z in {2/12/0,2/12/3,2/12/6,4/12/0,4/12/3,4/12/6,
    5/12/3,5/12/6,5/12/9,7/12/3,7/12/6,7/12/9,
    8/12/6,8/12/9,8/12/12,10/12/6,10/12/9,10/12/12,
    11/12/9,11/12/12,11/12/15,13/12/9,13/12/12,13/12/15,
    14/12/12,14/12/15,14/12/18,16/12/12,16/12/15,16/12/18,
    17/12/15,17/12/18,17/12/21,19/12/15,19/12/18,19/12/21,
    20/12/18,20/12/21,20/12/24,22/12/18,22/12/21,22/12/24} {
      \draw (vb\x\y) -- (cb\z);
    }
  \end{pgfonlayer}
\end{scope}

\foreach \x in {0,3,6,9,12,15,18,21,24} {
  \draw[fill=black] (\x,2)+(-5pt,-5pt) rectangle +(5pt,5pt)
  node[left=4pt,below=7pt,outer sep=0pt,inner sep=0pt] (c\x) {};
  \node[outer sep=0pt,inner sep=0pt] (c\x) at (\x,2) {};
}
\foreach \x in {2,4,5,7,8,10,11,13,14,16,17,19,20,22} {
  \foreach \y in {0} {
    \shade[ball color=blue] (\x,\y) circle (6pt) node[outer
    sep=0pt,inner sep=0pt] (v\x\y) {};
  }	
}
\begin{pgfonlayer}{foreground}
  \foreach \x in {14.5,15,15.5} {
    \foreach \y in {-1.5,0,2} {
      \draw[fill=black] (\x,\y) circle (0.5pt);
    }
  }
\end{pgfonlayer}
\draw[white,fill=white] (13.75,-0.5) rectangle +(2.5,4.5);
      
\begin{pgfonlayer}{background}
  \foreach \x/\y/\z in {2/0/0,2/0/3,2/0/6,4/0/0,4/0/3,4/0/6,
    5/0/3,5/0/6,5/0/9,7/0/3,7/0/6,7/0/9,
    8/0/6,8/0/9,8/0/12,10/0/6,10/0/9,10/0/12,
    11/0/9,11/0/12,11/0/15,13/0/9,13/0/12,13/0/15,
    14/0/12,14/0/15,14/0/18,16/0/12,16/0/15,16/0/18,
    17/0/15,17/0/18,17/0/21,19/0/15,19/0/18,19/0/21,
    20/0/18,20/0/21,20/0/24,22/0/18,22/0/21,22/0/24} {
    \draw (v\x\y) -- (c\z);
  }
        
\foreach \x in {2,4,5,7,8,10,11,13,17,19,20,22} {
  \draw[thick] (\x,0) -- (\x,-3);
  \node[diamond,draw=black,fill=red,inner sep=0pt,minimum size=5pt] at
  (\x,-1.5) {};
}
\end{pgfonlayer}
\draw [gray,thick,decorate,decoration={brace,amplitude=5pt}]
  (23,-6) -- (1,-6)
  node [black,midway,below=3pt] {$2L+1$};

\foreach \x in {1.5,4.5,7.5,10.5,19.5,22.5} {
  \draw[dashed,gray] (\x,3) -- (\x,-6);
}

\begin{scope}[yshift=10cm]
  \begin{scope}[yshift=-15cm] 
  \foreach \x in {3,6,9,12,15,18,21} {
    \draw[fill=black] (\x,10)+(-5pt,-5pt) rectangle +(5pt,5pt)
    node[left=4pt,below=7pt,outer sep=0pt,inner sep=0pt] (c\x) {};
    \node[outer sep=0pt,inner sep=0pt] (cb\x) at (\x,10) {};
  }
  \foreach \x in {2,4,5,7,8,10,11,13,14,16,17,19,20,22} {
    \foreach \y in {12} {
      \shade[ball color=blue] (\x,\y) circle (6pt) node[outer
      sep=0pt,inner sep=0pt] (vb\x\y) {};
    }	
  }
  \begin{pgfonlayer}{foreground}
    \foreach \x in {14.5,15,15.5} {
      \foreach \y in {10,12} {
        \draw[fill=black] (\x,\y) circle (0.5pt);
      }
    }
  \end{pgfonlayer}
  \draw[white,fill=white] (13.75,8) rectangle +(2.5,4.5);
  \begin{pgfonlayer}{background}
    \foreach \x/\y/\z in {2/12/3,4/12/3,
    5/12/6,7/12/6,
    8/12/9,10/12/9,
    11/12/12,13/12/12,
    17/12/18,19/12/18,
    20/12/21,22/12/21} {
      \draw (vb\x\y) to              (cb\z);
      \draw (vb\x\y) to [bend left ] (cb\z);
      \draw (vb\x\y) to [bend right] (cb\z);
    }
  \end{pgfonlayer}
\end{scope}

\foreach \x in {3,6,9,12,15,18,21} {
  \draw[fill=black] (\x,2)+(-5pt,-5pt) rectangle +(5pt,5pt)
  node[left=4pt,below=7pt,outer sep=0pt,inner sep=0pt] (c\x) {};
  \node[outer sep=0pt,inner sep=0pt] (c\x) at (\x,2) {};
}
\foreach \x in {2,4,5,7,8,10,11,13,14,16,17,19,20,22} {
  \foreach \y in {0} {
    \shade[ball color=blue] (\x,\y) circle (6pt) node[outer
    sep=0pt,inner sep=0pt] (v\x\y) {};
  }	
}
\begin{pgfonlayer}{foreground}
  \foreach \x in {14.5,15,15.5} {
    \foreach \y in {-1.5,0,2} {
      \draw[fill=black] (\x,\y) circle (0.5pt);
    }
  }
\end{pgfonlayer}
\draw[white,fill=white] (13.75,-0.5) rectangle +(2.5,4.5);
      
  \begin{pgfonlayer}{background}
    \foreach \x/\y/\z in {2/0/3,4/0/3,
    5/0/6,7/0/6,
    8/0/9,10/0/9,
    11/0/12,13/0/12,
    17/0/18,19/0/18,
    20/0/21,22/0/21} {
      \draw (v\x\y) to		    (c\z);
      \draw (v\x\y) to [bend left ] (c\z);
      \draw (v\x\y) to [bend right] (c\z);
    }
        
\foreach \x in {2,4,5,7,8,10,11,13,17,19,20,22} {
  \draw[thick] (\x,0) -- (\x,-3);
  \node[diamond,draw=black,fill=red,inner sep=0pt,minimum size=5pt] at
  (\x,-1.5) {};
}
\end{pgfonlayer}

\foreach \x in {1.5,4.5,7.5,10.5,19.5,22.5} {
  \draw[dashed,gray] (\x,3) -- (\x,-6);
}

\end{scope}
\end{tikzpicture}

  \caption{Protograph of the joint decoder. Shown above are $2L+1$
    copies of the protograph of the joint decoder for a $(3,6)$
    regular LDPC code. The bottom graph shows the protograph of the
    joint decoder for the corresponding spatially coupled code.}
  \label{fig:protograph}
\end{figure}
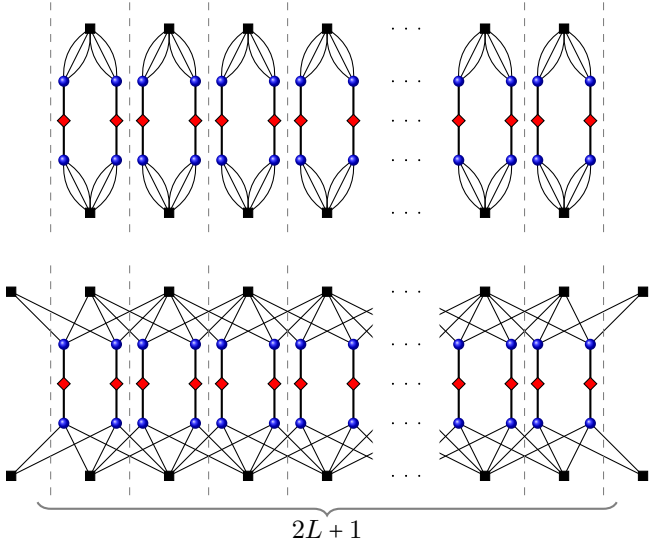

Although this ensemble is very instructive in understanding the
universality of \sc{} codes, the EBP curves for this ensemble exhibit
wiggles around the MAP threshold (similar to the single user channels
as discussed in \cite{Kudekar-it11}). The magnitude of these wiggles
appears to remain constant with increasing $L$ and their presence
implies that the BP threshold is smaller than the MAP threshold of the
underlying ensemble. Therefore, the $(3,6,L)$ ensemble does not
exhibit the threshold saturation phenomenon exactly. To overcome this,
we use the $(l,r,L,w)$ ensemble introduced in \cite{Kudekar-it11} for
the remainder of this work.


\subsection{The $(l,r,L,w)$ ensemble}
\label{sec:l-r-l-1}
The $(l,r,L,w)$ \sc{} ensemble can be described as
follows: Place $M$ variable nodes at each position in $[-L,L]$. The
check nodes are placed at positions $[-L,L+w-1]$, with $\frac lr M$
check nodes at each position. The connections are made as described in
\cite{Kudekar-it11}. This procedure generates a Tanner graph for the
$(l,r,L,w)$ ensemble. The design rate for the $(l,r,L,w)$ ensemble is
shown in \cite{Kudekar-it11} to be
\begin{align*}
  \rate(l,r,L,w) &= \left(1 - \frac lr\right) - \frac lr\frac{(w+1)
      - 2\sum_{i=0}^{w}
        \left(\frac{i}{w}\right)^{r}}{2L + 1}.
\end{align*}

Two such graphs (generated by the above procedure) are taken and the
variable nodes (of each graph) at each position are connected by a
random (uniform) permutation of size $M$ via channel nodes. This
procedure ensures that all the variable node positions are symmetric
and enables us to write down the density evolution (DE) equations in a
simple manner, as described in the following section.

\subsection{Density evolution of the $(l,r,L,w)$ ensemble}
\label{sec:density-evolution}

Let $\a_{i}^{(\ell)}$ and $\b_{i}^{(\ell)}$ denote the average density
emitted by the variable node at position $i$, at iteration $\ell$, for
codes $1$ and $2$ respectively. Set $\a_{i}^{(\ell)} = \b_{i}^{(\ell)}
= \Delta_{+\infty}$ for $i\notin[-L,L]$. The channel densities for
codes $1$ and $2$ are denoted by $\a_{\text{BMSC}}$ and
$\b_{\text{BMSC}}$ respectively. All the above densities are
$L$-densities conditioned on the transmission of the all-zero codeword
(see Section~\ref{sec:dens-evol-gexit}). We consider the parallel
schedule for each user (as described in \cite{Kudekar-it11}) and
update the correlation nodes before proceeding to the next
iteration. Let us define
\begin{align*}
  \mbox{\small $g(\x_{i-w+1},\cdots,\x_{i+w-1})\! \triangleq\! \left(\!\displaystyle\frac
    1w\!\sum_{j=0}^{w-1}\!\left(\!\frac 1w\!\displaystyle\sum_{k=0}^{w-1}\x_{i+j-k}\right)^{\boxast(r-1)}\right)^{\oast(l-1)}\!\!,$}
\end{align*}
\begin{align*}
    \mbox{\small $\Gamma(\x_{i-w+1},\cdots,\x_{i+w-1})\! \triangleq\! \left(\!\displaystyle\frac
    1w\!\sum_{j=0}^{w-1}\!\left(\!\frac
      1w\!\displaystyle\sum_{k=0}^{w-1}\x_{i+j-k}\right)^{\boxast(r-1)}\right)^{\oast
    l}\!\!.$}
\end{align*}
The DE equations for the joint \sc{} system can be written as
\begin{align*}
  \a_{i}^{(\ell+1)} &= f_{2\to 1}\left(\Gamma(\b_{i-w+1}^{(\ell)},\cdots,\b_{i+w-1}^{(\ell)}),\a_{\text{BAWGNMA}}\right)\oast \\
    &\phantom{==[}g(\a_{i-w+1}^{(\ell)},\cdots,\a_{i+w-1}^{(\ell)}),\\
  \b_{i}^{(\ell+1)} &= f_{1\to 2}\left(\Gamma(\a_{i-w+1}^{(\ell)},\cdots,\a_{i+w-1}^{(\ell)}),\a_{\text{BAWGNMA}}\right)\oast \\
    &\phantom{==[}g(\b_{i-w+1}^{(\ell)},\cdots,\b_{i+w-1}^{(\ell)}),
\end{align*}
for $i\in[-L,L]$. For a further discussion of the DE equations for the
$(l,r,L,w)$ \sc{} ensembles on BMS channels, see
\cite{Kudekar-istc10}. Using the notation $\underline{\a} \triangleq
(\a_{-L},\cdots,\a_{L})$, the fixed points of DE are given by
$(\a_{\text{BAWGNMA}},\underline{\a},\underline{\b})$, which satisfy
\begin{align}
\label{eq:de-sym}
  \a_{i} &= f\left(\Gamma(\b_{i-w+1},\cdots,\b_{i+w-1}),\a_{\text{BAWGNMA}}\right)\oast\notag\\
    &\phantom{==[}g(\a_{i-w+1},\cdots,\a_{i+w-1}) \notag\\
  \b_{i} &= f\left(\Gamma(\a_{i-w+1},\cdots,\a_{i+w-1}),\a_{\text{BAWGNMA}}\right)\oast\notag\\
    &\phantom{==[}g(\b_{i-w+1},\cdots,\b_{i+w-1}).
\end{align}
One can use the procedure outlined in \cite{Measson-it09}, known as
fixed-entropy DE, to compute both stable and unstable fixed points that
satisfy (\ref{eq:de-sym}).

\subsection{GEXIT curves for the $(l,r,L,w)$ ensemble}
\label{sec:gexit-curves-l}

Define the GEXIT value of a fixed point
$(\a_{\text{BAWGNMA}},\underline{\a},\underline{\b})$ by
\begin{align*}
  \G(\a_{\text{BAWGNMA}},\underline{\a},\underline{\b}) &\triangleq \frac{1}{2L+1}\sum_{i=-L}^{L}\G(\a_{\text{BAWGNMA}},\a_i,\b_i).
\end{align*}

The BP-GEXIT curve $\gexit(\alpha)$, for a fixed $A$, is the set of
points
$(\alpha,\G(\a_{\text{BAWGNMA}(\alpha)},\underline{\a},\underline{\b}))$. The
resulting curves for the \sc{} $(3,6,16,2)$ and $(3,6,32,4)$ ensembles
are shown in Fig.~\ref{fig:spatial-ebp-gexit-lrlw} for symmetric
channel conditions. These curves are very similar to the single user
case and demonstrate the phenomenon of threshold saturation at the
joint decoder, for symmetric channel conditions. For channel
parameters not on the symmetric line, these plots imply threshold
saturation towards the MAP boundary.


\begin{figure}[t!]
  \centering
  \begin{tikzpicture}[scale=0.48,>=stealth]

\definecolor{mycolor1}{rgb}{0,1,1}
\definecolor{mycolor2}{rgb}{1,0,1}

\begin{axis}[
scale only axis,
width=5in,
height=5in,
xmin=0, xmax=2,
ymin=-1.4, ymax=0,
xlabel=\mbox{\LARGE $\alpha$},
ylabel=\mbox{\LARGE $\mathsf{g}^{\text{BP}}$},
xmajorgrids,
ymajorgrids]

\addplot [
color=blue,
line width=1.25pt,
solid
]
coordinates{
 (0,-5.42634e-15)
 (0.01,-0.0288481)
 (0.02,-0.0576617)
 (0.03,-0.0864062)
 (0.04,-0.115047)
 (0.05,-0.143552)
 (0.06,-0.171886)
 (0.07,-0.200017)
 (0.08,-0.227914)
 (0.09,-0.255545)
 (0.1,-0.282881)
 (0.11,-0.309893)
 (0.12,-0.336553)
 (0.13,-0.362835)
 (0.14,-0.388714)
 (0.15,-0.414166)
 (0.16,-0.439169)
 (0.17,-0.463702)
 (0.18,-0.487746)
 (0.19,-0.511283)
 (0.2,-0.534298)
 (0.21,-0.556775)
 (0.22,-0.578701)
 (0.23,-0.600064)
 (0.24,-0.620855)
 (0.25,-0.641065)
 (0.26,-0.660686)
 (0.27,-0.679713)
 (0.28,-0.698141)
 (0.29,-0.715968)
 (0.3,-0.73319)
 (0.31,-0.749808)
 (0.32,-0.765823)
 (0.33,-0.781234)
 (0.34,-0.796047)
 (0.35,-0.810264)
 (0.36,-0.823888)
 (0.37,-0.836929)
 (0.38,-0.849388)
 (0.39,-0.861277)
 (0.4,-0.872598)
 (0.41,-0.883366)
 (0.42,-0.893589)
 (0.43,-0.903272)
 (0.44,-0.912423)
 (0.45,-0.921064)
 (0.46,-0.929198)
 (0.47,-0.936837)
 (0.48,-0.943992)
 (0.49,-0.950676)
 (0.5,-0.956902)
 (0.51,-0.962675)
 (0.52,-0.96802)
 (0.53,-0.972938)
 (0.54,-0.977447)
 (0.55,-0.981552)
 (0.56,-0.985281)
 (0.57,-0.988637)
 (0.58,-0.991625)
 (0.59,-0.994266)
 (0.6,-0.996567)
 (0.61,-0.998535)
 (0.62,-1.00019)
 (0.63,-1.00154)
 (0.64,-1.00258)
 (0.65,-1.00335)
 (0.66,-1.00386)
 (0.67,-1.0041)
 (0.68,-1.00408)
 (0.69,-1.00382)
 (0.7,-1.00332)
 (0.71,-1.0026)
 (0.72,-1.00166)
 (0.73,-1.00051)
 (0.74,-0.999149)
 (0.75,-0.997601)
 (0.76,-0.995861)
 (0.77,-0.99394)
 (0.78,-0.991841)
 (0.79,-0.989572)
 (0.8,-0.98714)
 (0.81,-0.984549)
 (0.82,-0.981801)
 (0.83,-0.978908)
 (0.84,-0.975869)
 (0.85,-0.972684)
 (0.86,-0.969363)
 (0.87,-0.965917)
 (0.88,-0.962338)
 (0.89,-0.958635)
 (0.9,-0.954805)
 (0.91,-0.950853)
 (0.92,-0.946785)
 (0.93,-0.942603)
 (0.94,-0.938315)
 (0.95,-0.933922)
 (0.96,-0.929425)
 (0.97,-0.924805)
 (0.98,-0.920068)
 (0.99,-0.915247)
 (1,-0.910315)
 (1.01,-0.90528)
 (1.02,-0.900199)
 (1.03,-0.894993)
 (1.04,-0.889678)
 (1.05,-0.884259)
 (1.06,-0.878783)
 (1.07,-0.87322)
 (1.08,-0.86755)
 (1.09,-0.861828)
 (1.1,-0.85597)
 (1.11,-0.850072)
 (1.12,-0.844073)
 (1.13,-0.838029)
 (1.14,-0.831877)
 (1.15,-0.825659)
 (1.16,-0.81937)
 (1.17,-0.813015)
 (1.18,-0.806605)
 (1.19,-0.800103)
 (1.2,-0.79354)
 (1.21,-0.786886)
 (1.22,-0.780155)
 (1.23,-0.773417)
 (1.24,-0.766598)
 (1.25,-0.759702)
 (1.26,-0.752815)
 (1.27,-0.745878)
 (1.28,-0.738858)
 (1.29,-0.731805)
 (1.3,-0.724716)
 (1.31,-0.71758)
 (1.32,-0.710319)
 (1.33,-0.70309)
 (1.34,-0.695805)
 (1.35,-0.688533)
 (1.36,-0.681237)
 (1.37,-0.673863)
 (1.38,-0.666452)
 (1.39,-0.659047)
 (1.4,-0.65163)
 (1.41,-0.644136)
 (1.42,-0.636657)
 (1.43,-0.62914)
 (1.44,-0.621651)
 (1.45,-0.61409)
 (1.46,-0.606589)
 (1.47,-0.598988)
 (1.48,-0.591449)
 (1.49,-0.583859)
 (1.5,-0.576268)
 (1.51,-0.568711)
 (1.52,-0.561142)
 (1.53,-0.553569)
 (1.54,-0.545893)
 (1.55,-0.53833)
 (1.56,-0.530777)
 (1.57,-0.523132)
 (1.58,-0.515497)
 (1.59,-0.507818)
 (1.6,-0.500099)
 (1.61,-0.492378)
 (1.62,-0.484632)
 (1.63,-0.476879)
 (1.64,-0.469073)
 (1.65,-0.461094)
 (1.66,-0.452923)
 (1.67,-0.444385)
 (1.68,-0.435389)
 (1.69,-0.424852)
 (1.69,-2.96196e-06)
 (1.7,-2.96196e-06)
 (1.71,-2.97647e-06)
 (1.72,-2.99078e-06)
 (1.73,-3.00487e-06)
 (1.74,-3.01874e-06)
 (1.75,-3.03239e-06)
 (1.76,-3.04579e-06)
 (1.77,-3.05894e-06)
 (1.78,-3.07183e-06)
 (1.79,-3.08444e-06)
 (1.8,-3.09678e-06)
 (1.81,-3.10882e-06)
 (1.82,-3.12056e-06)
 (1.83,-3.13199e-06)
 (1.84,-3.1431e-06)
 (1.85,-3.15387e-06)
 (1.86,-3.16429e-06)
 (1.87,-3.17436e-06)
 (1.88,-3.18406e-06)
 (1.89,-3.19339e-06)
 (1.9,-3.20233e-06)
 (1.91,-3.21087e-06)
 (1.92,-3.21901e-06)
 (1.93,-3.22672e-06)
 (1.94,-3.23401e-06)
 (1.95,-3.24086e-06)
 (1.96,-3.24726e-06)
 (1.97,-3.2532e-06)
 (1.98,-3.25868e-06)
 (1.99,-3.26369e-06)
 (2,-3.26821e-06)};

\addplot [
color=black,
fill=gray,fill opacity=0.2,
solid
]
coordinates{
 (0,-5.42634e-15)
 (0.01,-0.0288481)
 (0.02,-0.0576617)
 (0.03,-0.0864062)
 (0.04,-0.115047)
 (0.05,-0.143552)
 (0.06,-0.171886)
 (0.07,-0.200017)
 (0.08,-0.227914)
 (0.09,-0.255545)
 (0.1,-0.282881)
 (0.11,-0.309893)
 (0.12,-0.336553)
 (0.13,-0.362835)
 (0.14,-0.388714)
 (0.15,-0.414166)
 (0.16,-0.439169)
 (0.17,-0.463702)
 (0.18,-0.487746)
 (0.19,-0.511283)
 (0.2,-0.534298)
 (0.21,-0.556775)
 (0.22,-0.578701)
 (0.23,-0.600064)
 (0.24,-0.620855)
 (0.25,-0.641065)
 (0.26,-0.660686)
 (0.27,-0.679713)
 (0.28,-0.698141)
 (0.29,-0.715968)
 (0.3,-0.73319)
 (0.31,-0.749808)
 (0.32,-0.765823)
 (0.33,-0.781234)
 (0.34,-0.796047)
 (0.35,-0.810264)
 (0.36,-0.823888)
 (0.37,-0.836929)
 (0.38,-0.849388)
 (0.39,-0.861277)
 (0.4,-0.872598)
 (0.41,-0.883366)
 (0.42,-0.893589)
 (0.43,-0.903272)
 (0.44,-0.912423)
 (0.45,-0.921064)
 (0.46,-0.929198)
 (0.47,-0.936837)
 (0.48,-0.943992)
 (0.49,-0.950676)
 (0.5,-0.956902)
 (0.51,-0.962675)
 (0.52,-0.96802)
 (0.53,-0.972938)
 (0.54,-0.977447)
 (0.55,-0.981552)
 (0.56,-0.985281)
 (0.57,-0.988637)
 (0.58,-0.991625)
 (0.59,-0.994266)
 (0.6,-0.996567)
 (0.61,-0.998535)
 (0.62,-1.00019)
 (0.63,-1.00154)
 (0.64,-1.00258)
 (0.65,-1.00335)
 (0.66,-1.00386)
 (0.67,-1.0041)
 (0.68,-1.00408)
 (0.69,-1.00382)
 (0.7,-1.00332)
 (0.71,-1.0026)
 (0.72,-1.00166)
 (0.73,-1.00051)
 (0.74,-0.999149)
 (0.75,-0.997601)
 (0.76,-0.995861)
 (0.77,-0.99394)
 (0.78,-0.991841)
 (0.79,-0.989572)
 (0.8,-0.98714)
 (0.81,-0.984549)
 (0.82,-0.981801)
 (0.83,-0.978908)
 (0.84,-0.975869)
 (0.85,-0.972684)
 (0.86,-0.969363)
 (0.87,-0.965917)
 (0.88,-0.962338)
 (0.89,-0.958635)
 (0.9,-0.954805)
 (0.91,-0.950853)
 (0.92,-0.946785)
 (0.93,-0.942603)
 (0.94,-0.938315)
 (0.95,-0.933922)
 (0.96,-0.929425)
 (0.97,-0.924805)
 (0.98,-0.920068)
 (0.99,-0.915247)
 (1,-0.910315)
 (1.01,-0.90528)
 (1.02,-0.900199)
 (1.03,-0.894993)
 (1.04,-0.889678)
 (1.05,-0.884259)
 (1.06,-0.878783)
 (1.07,-0.87322)
 (1.08,-0.86755)
 (1.09,-0.861828)
 (1.1,-0.85597)
 (1.11,-0.850072)
 (1.12,-0.844073)
 (1.13,-0.838029)
 (1.14,-0.831877)
 (1.15,-0.825659)
 (1.16,-0.81937)
 (1.17,-0.813015)
 (1.18,-0.806605)
 (1.19,-0.800103)
 (1.2,-0.79354)
 (1.21,-0.786886)
 (1.22,-0.780155)
 (1.23,-0.773417)
 (1.24,-0.766598)
 (1.25,-0.759702)
 (1.26,-0.752815)
 (1.2629,-0.7508)
 (1.2629,0)
}|- (axis cs:0,0) -- cycle;

\addplot [
color=red,
solid
]
coordinates{
 (0,-5.40807e-15)
 (0.05,-0.143066)
 (0.1,-0.281892)
 (0.15,-0.412643)
 (0.2,-0.532204)
 (0.25,-0.638362)
 (0.3,-0.729847)
 (0.35,-0.806259)
 (0.4,-0.867919)
 (0.45,-0.915719)
 (0.5,-0.950897)
 (0.55,-0.974893)
 (0.6,-0.989277)
 (0.65,-0.995401)
 (0.7,-0.994695)
 (0.75,-0.988229)
 (0.8,-0.976937)
 (0.85,-0.961494)
 (0.9,-0.942417)
 (0.95,-0.920032)
 (1,-0.894493)
 (1.05,-0.865845)
 (1.1,-0.833976)
 (1.15,-0.798553)
 (1.2,-0.758906)
 (1.21,-0.750289)
 (1.22,-0.741353)
 (1.23,-0.732032)
 (1.24,-0.722031)
 (1.25,-0.710729)
 (1.26,-0.694181)
 (1.264,-0.145944)
 (1.264,-2.25967e-06)
 (1.28,-2.25967e-06)
 (1.29,-2.27725e-06)
 (1.3,-2.29482e-06)
 (1.35,-2.38248e-06)
 (1.4,-2.46971e-06)
 (1.5,-2.64175e-06)

};

\addplot [
color=brown,
solid
]
coordinates{
 (0,-5.37717e-15)
 (0.05,-0.142252)
 (0.1,-0.280282)
 (0.15,-0.410274)
 (0.2,-0.529128)
 (0.25,-0.634646)
 (0.3,-0.725561)
 (0.35,-0.801477)
 (0.4,-0.862707)
 (0.45,-0.910136)
 (0.5,-0.94499)
 (0.55,-0.968696)
 (0.6,-0.982814)
 (0.65,-0.988678)
 (0.7,-0.987717)
 (0.75,-0.980994)
 (0.8,-0.969437)
 (0.85,-0.953726)
 (0.9,-0.934386)
 (0.95,-0.911762)
 (1,-0.886036)
 (1.05,-0.857307)
 (1.1,-0.825512)
 (1.15,-0.790261)
 (1.2,-0.75017)
 (1.21,-0.741181)
 (1.22,-0.73167)
 (1.23,-0.721467)
 (1.24,-0.710095)
 (1.25,-0.696519)
 (1.26,-0.675946)
 (1.26,-2.24208e-06)
 (1.27,-2.24208e-06)
 (1.28,-2.25967e-06)
 (1.29,-2.27725e-06)
 (1.3,-2.29482e-06)
 (1.35,-2.38248e-06)
 (1.4,-2.46971e-06)
 (1.5,-2.64175e-06)

};

\node[rotate=90] (map) at (axis cs:1.315,-0.16) {\LARGE{$\bar{\alpha} = 1.2629$}};
\node[blue,rotate=90] (bp) at (axis cs:1.745,-0.16) {\LARGE{$\alpha^{\text{BP}} = 1.69$}};
\node[black] (area) at (axis cs:0.7,-0.35) 
{\Huge{$\int\limits_{\bar{\alpha}}^0\mathsf{g}^{\text{BP}}(\alpha) = 1$}};
\node[red] (g1) at (axis cs:0.7,-1.2) {\LARGE{BP-GEXIT,
    LDPC$(3,6,16,2)$}};
\draw[red,->] (g1) -- (axis cs:0.8,-0.976937);
\node[brown] (g2) at (axis cs:1.4,-1.05) {\LARGE{BP-GEXIT,
    LDPC$(3,6,32,4)$}};
\draw[brown,->] (g2) -- (axis cs:1.2,-0.75017);
\end{axis}

\end{tikzpicture}

  \caption{BP-GEXIT curves of the $(3,6,L,w)$ \sc{} LDPC and $(3,6)$
    regular LDPC ensembles for transmission over a $2$-user
    binary-input Gaussian MAC, with $A=1$. Also shown is the upper
    bound on the MAP threshold for the $(3,6)$ regular LDPC ensemble,
    computed using the area theorem.}
  \label{fig:spatial-ebp-gexit-lrlw}
\end{figure}
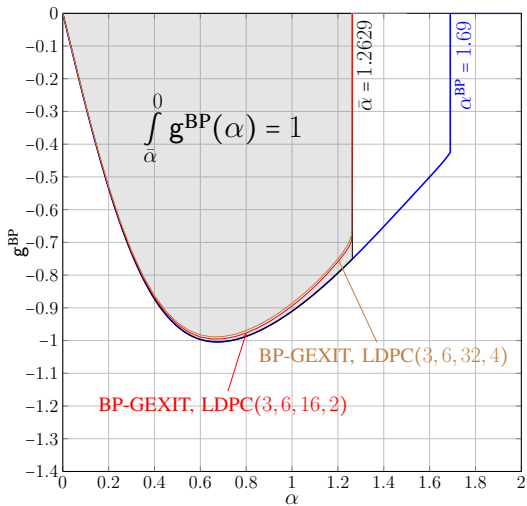


\section{Results and Concluding Remarks}
\label{sec:results-concl-remark}
It was shown in \cite{Kudekar-it11}, that for transmission over a BEC,
the BP threshold of \sc{} ensembles is essentially equal to the MAP
threshold of the underlying ensemble. This was observed numerically
for general BMS channels in \cite{Kudekar-istc10,Lentmaier-isit05}. It
was observed numerically for multi-user scenarios in
\cite{Yedla-isit11,Kudekar-isit11-MAC}. The notion of universality
with respect to channel parameters is important for multi-terminal
problems and has been discussed in
\cite{Yedla-isit11,Yedla-aller09,Yedla-istc10}.

In this paper, we study the 2-user binary-input Gaussian MAC and
observe that spatial coupling boosts the BP-ACPR of the joint decoder
to the MAP-ACPR of the underlying ensemble. The BP-ACPR for the
scenarios considered in this paper are shown in
Fig.~\ref{fig:roc_ldpc_gmac} and these results confirm the preliminary
results reported in \cite{Yedla-isit11}.  This figure shows that \sc{}
ensembles are near universal for this problem.  Based on the
observation that regular LDPC codes with large left degrees behave
like random codes and the fact that random codes are universal under
MAP decoding, we also conjecture that increasing the left degree
(keeping the rate constant) will push the MAP boundary towards the
boundary of the MAC-ACPR. An analytic proof of threshold saturation
remains an open problem. Such a proof would essentially show that it
is possible to achieve universality for the $2$-user binary-input
Gaussian MAC under iterative decoding.

\begin{figure}
  \centering
  \input{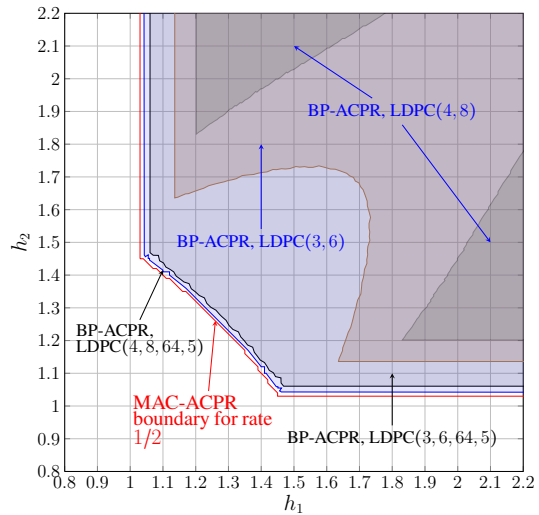}
  \caption{BP-ACPR of the $(3,6,64,5)$ and $(4,8,64,5)$ \sc{} LDPC
    ensembles for the $2$-user binary-input Gaussian MAC. Also shown
    are the BP-ACPRs for the $(3,6)$ and $(4,8)$ regular LDPC
    ensembles. The BP-ACPR of the $(4,8,64,5)$ \sc{} LDPC ensemble is
    very close to the MAC-ACPR, demonstrating the near-universal
    performance of \sc{} codes.}
  \label{fig:roc_ldpc_gmac}
\end{figure}

This work can be extended in a variety of ways.  For example, it is
straightforward to dispense with AWGN and compute the ACPRs of any
suitably parameterized 2-user binary-input MAC.  One can also
generalize these results to $m$-user MACs, larger input alphabets, and
multiple-input multiple-output (MIMO) systems.  In these cases, the
increase in computational complexity makes discretized DE infeasible
and Monte Carlo methods must be used to evaluate the DE and GEXIT
functions.  We conjecture that threshold saturation will continue to
occur for all these extensions and that \sc{} codes will achieve
near-universal performance.


\bibliographystyle{IEEEtran}
\bibliography{IEEEabrv,WCLabrv,WCLbib,WCLnewbib}

\end{document}